\documentclass[a4paper,twocolumn,11pt,accepted=2024-11-06]{quantumarticle}
\pdfoutput=1

\usepackage[T1]{fontenc}
\usepackage[latin9]{inputenc}
\usepackage{textcomp}
\usepackage{amssymb}
\usepackage{amsthm}
\usepackage{enumerate}
\usepackage{graphicx}
\usepackage{mathtools}
\usepackage{esint}
\usepackage{tikz}
\usepackage{subfigure}
\usepackage{hyperref}
\usepackage{bbold}
\usepackage[english]{babel}

\usepackage{bm}
\usepackage{braket}
\usepackage[numbers,sort&compress]{natbib}
\makeatletter

\usepackage{amsmath}

\newtheorem{defi}{Definition}
\newtheorem{theo}{Theorem}
\newtheorem{lem}{Lemma}
\newtheorem{algo}{Algorithm}

\newcommand{\ketbra}[1]{\ket{#1}\hspace{-0.3em}\bra{#1}}
\newcommand{\Tr}{\mathrm{Tr}}
\newcommand{\dm}[1]{d^{\,2}{#1}\,}
\newcommand{\bdm}[1]{\text{d}^{2m}\!\bm{#1}\,}
\newcommand{\bdn}[1]{\text{d}^{2n}\!\bm{#1}\,}

\newcommand{\PQD}{\hbox{PQD}}
\newcommand{\sPQD}{\hbox{($s$)-PQD}}
\newcommand{\SPQD}{\hbox{($\bm{s}$)-PQD}}

\newcommand{\negtPQD}{\hbox{($-t$)-PQD}}
\newcommand{\TPQD}{\hbox{($\bm{t}$)-PQD}}

\newcommand{\rhoin}{\rho_{\rm{in}}}





\@ifundefined{textcolor}{}
{%
 \definecolor{BLACK}{gray}{0}
 \definecolor{WHITE}{gray}{1}
 \definecolor{RED}{rgb}{1,0,0}
 \definecolor{GREEN}{rgb}{0,1,0}
 \definecolor{BLUE}{rgb}{0,0,1}
 \definecolor{CYAN}{cmyk}{1,0,0,0}
 \definecolor{MAGENTA}{cmyk}{0,1,0,0}
 \definecolor{YELLOW}{cmyk}{0,0,1,0}
 }
\makeatother

\begin{document} 

\title{Phase-space negativity as a computational resource for quantum kernel methods}


\author{Ulysse Chabaud}
\thanks{ulysse.chabaud@inria.fr}
\affiliation{DIENS, \'Ecole Normale Sup\'erieure, PSL University, CNRS, INRIA, 45 rue d'Ulm, Paris 75005, France}

\author{Roohollah Ghobadi}
\thanks{farid.ghobadi80@gmail.com}
\affiliation{Institute for Quantum Science and Technology, University of Calgary, Calgary, AB, T2N 1N4, Canada}

\author{Salman Beigi}
\thanks{salman.beigi@gmail.com}
\affiliation{School of Mathematics, Institute for Research in Fundamental Sciences (IPM), P.O. Box 19395-5746, Tehran, Iran}

\author{Saleh Rahimi-Keshari}
\thanks{s.rahimik@gmail.com}
\affiliation{School of Physics, Institute for Research in Fundamental Sciences (IPM), P.O. Box 19395-5531, Tehran, Iran}

\begin{abstract}
Quantum kernel methods are a proposal for achieving quantum computational advantage in machine learning. They are based on a hybrid classical-quantum computation where a function called the quantum kernel is estimated by a quantum device while the rest of computation is performed classically. Quantum advantages may be achieved through this method only if the quantum kernel function cannot be estimated efficiently on a classical computer. In this paper, we provide sufficient conditions for the efficient classical estimation of quantum kernel functions for bosonic systems. These conditions are based on phase-space properties of data-encoding quantum states associated with the quantum kernels: negative volume, non-classical depth, and excess range, which are shown to be three signatures of phase-space negativity.
We consider quantum optical examples involving linear-optical networks with and without adaptive non-Gaussian measurements, and investigate the effects of loss on the efficiency of the classical simulation. Our results underpin the role of the negativity in phase-space quasi-probability distributions as an essential resource in quantum machine learning based on kernel methods. 
\end{abstract}

\maketitle

\section{Introduction}

Small-scale quantum devices with a few hundred qubits~\cite{preskill,arute2019quantum} represent a novel paradigm for applications in quantum simulation~\cite{RMP14}, quantum chemistry~\cite{qchem}, and quantum machine learning~\cite{biamonte2017quantum,Farhi}. Despite their relatively small scale, there are strong evidences based on complexity-theoretic arguments, that simulating the sampling statistics of these devices is beyond the reach of classical computers~\cite{terhal2002adaptive, bremner2016average,aaronson2011computational,Science}. However, determining why quantum devices can outperform classical ones remains a fundamental question in quantum information science. In particular, identifying the necessary quantum resources and understanding the effects of errors on them is a complex challenge.

One approach to address this challenge, especially for quantum sampling problems, has been based on the use of quasi-probability distributions~\cite{CahGla69,Spekkens, mari2012positive,veitch2013efficient,rahimi2016sufficient,Stahlke,pashayan2015estimating}. In this approach, a classical description of a quantum experiment is possible as long as one can find probabilistic descriptions for the input state, the evolution, and the measurement in terms of non-negative quasi-probability distributions. Also, the relation between the negative volume of quasi-probability distributions and the computational overhead in estimating an output probability of a quantum circuit using those distributions has been investigated~\cite{Stahlke,pashayan2015estimating}. 


In this paper, we build upon these results to quantify the sampling complexity in estimating \textit{quantum kernels} in terms of the negativity of the associated quasi-probability distributions. Quantum kernel methods have been proposed as a promising near-term application of quantum computers \cite{mengoni2019kernel}, and the non-classicality of quantum kernels has been previously investigated in \cite{ghobadi2021nonclassical}, based on phase-space inequalities from \cite{bohmann2020phase}. Hereafter, we go beyond these conceptual characterizations and present sufficient conditions for efficient classical estimation of kernel functions in continuous-variable quantum machine learning. Informally, we show that one can achieve the same performance as a quantum circuit by using classical Monte Carlo-type sampling techniques if the negativity of the phase-space quasi-probability distributions (PQD) of data-encoding quantum states is limited. This condition is independent of the measurement strategy and holds even if sampling from the output probability distribution of the circuit is classically hard, such as in boson sampling problems.

In light of this, we identify the negativity of \PQD s as a necessary resource to achieve quantum computational speedups in machine learning based on kernel methods. More precisely, we obtain classical algorithms with fine-grained sample complexity based on specific phase-space properties, namely negative volume~\cite{kenfack2004negativity,albarelli2018resource}, non-classical depth~\cite{lee1991measure,Sabapathy2016multimode} and excess ranges. While the negative volume is a direct measure of the negativity of a \PQD, the non-classical depth indicates the amount of thermal noise necessary to render a \PQD\ non-negative, and we further show that excess range of \PQD s can also be understood a signature of phase-space negativity. As illustrating examples, we discuss efficient classical estimation of quantum kernel functions that are based on Gaussian states, non-Gaussian output states of linear optical networks, and partially measured Gaussian states, including adaptive measurement strategies.

Our results involve a new approach to classical estimation of quantum kernel functions beyond probability estimation~\cite{Stahlke,pashayan2015estimating} that takes into account the structure of state preparation. A surprising consequence is that estimating quantum kernel functions based on a large class of partially measured Gaussian states can be done efficiently by classical computers despite these states being highly non-Gaussian (see Theorem~\ref{th:part-meas-G}).

The structure of the paper is as follows: in Section~\ref{sec:QML}, we set the stage and provide some background on quantum kernel methods; in Section~\ref{sec:kern-phase-space}, we derive general expressions for quantum kernel functions for bosonic systems based on \PQD s (see Lemma~\ref{lem:kern-expr}); in Section~\ref{sec:est}, we introduce classical Monte Carlo sampling methods and derive two phase-space-inspired classical algorithms for estimating quantum kernel functions (see Algorithms~\ref{algo:MCQD-kern1} and~\ref{algo:MCQD-kern2}), together with rigorous performance guarantees; we illustrate the flexibility and applicability of our algorithms in Section~\ref{sec:examples}, by applying them to several examples of high significance in bosonic quantum information processing, identifying the relevant quantum computational resources in each setting; we conclude in Section~\ref{sec:concl}.

\section{Quantum Kernel Methods}
\label{sec:QML}


In quantum machine learning, we often model an unknown function $f$ of some classical data $x$ by $f(x)= \Tr[\rho(x)O]$ where $\rho(x)$ is a quantum state depending on the input $x$, and $O$ is a quantum observable. 

A general encoding of classical data $x$ to a quantum state $\rho(x)$ over $m$ subsystems based on a quantum map may take the form
\begin{equation}\label{eq:norm-state}
x\mapsto\rho(x)=\frac{\mathrm{Tr}_k[(\Pi(x)\otimes\mathbb I_m)U(x)\rhoin(x)U(x)^\dag]}{\mathrm{Tr}[(\Pi(x)\otimes\mathbb I_m)U(x)\rhoin(x)U(x)^\dag]},
\end{equation}
where the classical data $x$ may be encoded in a density operator $\rhoin(x)$ describing an initial quantum state, a unitary operator $U(x)$ describing a quantum circuit, and a positive operator-valued measure (POVM) element $\Pi(x)$ describing the measurement, and where the subscript $k$ denotes that the first $k$ subsystems are being measured. The denominator is for normalisation of the post-measurement state
\begin{equation}\label{eq:post-meas-state}
\rho_{\Pi(x)}:=\mathrm{Tr}_k[(\Pi(x)\otimes\mathbb I_m)U(x)\rhoin(x)U(x)^\dag].
\end{equation}


It is often the case that classical data is encoded in the input state or the circuit parameters only, since these are the parameters that can be easily varied in practice~\cite{schuld2021supervised}. Note also that the dependency of $\rhoin$ or $\Pi$ on the classical data $x$ can be moved to $U(x)$ without loss of generality.  Similarly, the initial state $\rhoin$ can be chosen as a tensor product state without loss of generality by changing  $U$. That being said, we allow for very general encoding strategies and let the input, evolution and measurement potentially depend on classical data $x$.

In this setting, the encoding map $x\mapsto[\rhoin(x),U(x),\Pi(x)]$ is known, while the observable $O$ is unknown. The goal is to approximate $f(x)$ given a training dataset $\mathcal{D}=\{(x_{i},y_{i}=f(x_i))\}_{i=1}^{n}$. To this end, we look for an observable $\widetilde O$ that minimizes the cost function
\begin{equation}\label{eq:model}
\frac{1}{n} \sum_{i=1}^n c\big(\Tr\big(\rho(x_i) \widetilde O\big) , y_i  \Big) + \lambda\Tr\big(\widetilde O^2\big).
\end{equation}
Here, $c(\cdot, \cdot)\geq 0$ is a cost function which measures the distance of the predicted value $\Tr\big(\rho(x_i) \widetilde O\big)$ and the true value $y_i$. The second term with the regularization parameter
$\lambda>0$ ensures avoiding overfitting. This regularization term penalizes complex hypotheses that best match the training dataset but do not provide a good prediction for arbitrary inputs. 

Invoking the representer theorem, it can be shown that the optimal observable $\widetilde O$ that minimizes \eqref{eq:model} can be written as~\cite{scholkopf2001generalized, hofmann2008kernel}  
\begin{equation}
\widetilde O=\sum_{i=1}^{n}\alpha_{i}\rho(x_{i}),
\end{equation}
where $\alpha_i$'s are real numbers. In this case, the optimal approximating function equals 
\begin{equation}\label{eq:fopt}
\widetilde f(x)=\sum_{i=1}^{n}\alpha_{i}K(x,x_{i}),
\end{equation}
where we have introduced the \emph{kernel function} 
\begin{equation}\label{eq:kernel}
K(x,x')=\Tr\big[\rho(x)\rho(x')\big]=\frac{\Tr[\rho_{\Pi(x)}\rho_{\Pi(x')}]}{\Tr[\rho_{\Pi(x)}]\Tr[\rho_{\Pi(x')}]},
\end{equation}
where the right hand side is obtained using Eqs.~(\ref{eq:norm-state}) and (\ref{eq:post-meas-state}). This kernel function can be viewed as a measure of similarity between data points. 

According to Eq.~\eqref{eq:fopt}, in order to find the optimal function $\smash{\widetilde f(x)}$ we only need to compute the values of the kernel function; given two data points $x, x'$, we need to be able to compute the overlap of $\rho(x), \rho(x')$. The quantum kernel methods~\cite{havlivcek2019supervised,schuld2019quantum} are hybrid classical-quantum machine learning techniques which involve estimating these overlaps quantumly to within an inverse-polynomial additive error in the size of the corresponding quantum states, while the rest of the computation, i.e., computing the coefficients $\alpha_i$'s from the kernel values $K(x, x_i)$ is done classically. In particular, the quantum part of the computation consists in generating a polynomial number of copies of the states $\rho(x), \rho(x')$ and using these copies to estimate their overlap up to inverse-polynomial precision, for instance using the SWAP test~\cite{buhrman2001quantum}, which effectively implements a projection $\Pi_\text{Sym}$ onto the symmetric subspace (see Fig.~\ref{fig:Q-kern-est}). In the case of an encoding involving quantum measurements, this is only efficient when the probability to generate the states $\rho(x), \rho(x')$ is at least inverse-polynomially large, i.e., $\mathrm{Tr}[\rho_{\Pi(x)}]\ge1/\mathrm{poly}(m),\mathrm{Tr}[\rho_{\Pi(x')}]\ge1/\mathrm{poly}(m)$. When this is the case we say that the encoding is \textit{quantum-efficient}. As such, quantum computational advantage using quantum kernel methods may only come from the estimation of kernel values in the case of quantum-efficient encodings.

In the rest of this paper, we show using phase-space techniques that for a large family of quantum-efficient encoding schemes $x\mapsto\rho(x)$, the kernel values~\eqref{eq:kernel} can be estimated efficiently classically up to the same precision as can be done quantumly. When that is the case, there can be no computational advantage from quantum kernel methods.

\begin{figure}[t]
	\begin{center}
		\includegraphics[width=1\linewidth]{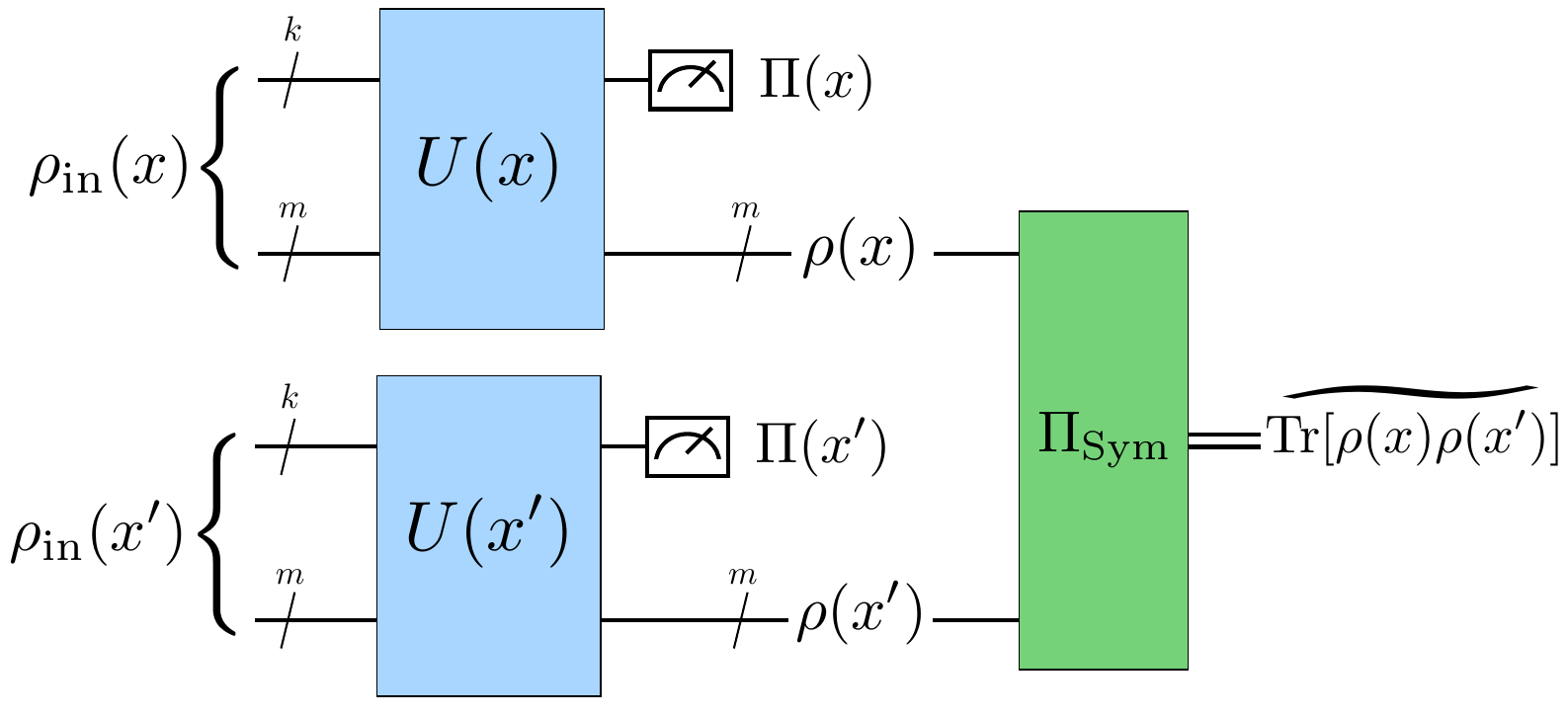}
		\caption{Quantum estimation of quantum state overlap. The green block $\Pi_\text{sym}$ represents the projection of $\rho(x)\otimes\rho(x')$ onto the symmetric subspace. Repeating that projection $N$ times allows to estimate the overlap $K(x,x')$ up to inverse-polynomial precision (in $N$) with exponentially small failure probability, by computing the frequency of successful projection.}
		\label{fig:Q-kern-est}
	\end{center}
\end{figure}

\section{Kernel functions in phase space}
\label{sec:kern-phase-space}

Let $\rho$ be a quantum state of an $m$-mode bosonic system. Then $\rho$ can be represented as~\cite{CahGla69,HilOCo84}:
\begin{equation}\label{eq:rho-PQD}
\rho=\pi^m\!\int_{\mathbb C^m}\! \bdm{\bm{\alpha}} W^{(\bm s)}_\rho(\bm{\alpha}) \Delta^{(-\bm s)}(\bm\alpha),
\end{equation}
where $\bm\alpha$ is a row vector of $m$ complex numbers, where $\bdm{\bm{\alpha}}=\mathrm d^m{\,\Re(\bm{\alpha})}\mathrm d^m{\,\Im(\bm{\alpha})}$, and where the $\bm s$-ordered phase-space quasi-probability distributions [\SPQD{s}] are defined by
\begin{equation}\label{eq:Wigner}
W^{(\bm s)}_\rho(\bm{\alpha})=\Tr[\rho \Delta^{(\bm s)}(\bm\alpha)],
\end{equation}
with phase-space point operators
\begin{equation}\label{eq:Delta-frame}
	\Delta^{(\bm s)}(\bm\alpha)=\int_{\mathbb C^m}\frac{\bdm{{\xi}}}{\pi^{2m}}\,D(\bm\xi)\, e^{\bm\xi\bm{s}\bm\xi^\dagger/2}
	e^{\bm\alpha\bm\xi^{\dagger}-\bm\xi\bm\alpha^{\dagger}}.
\end{equation}
Here, $\bm\alpha, \bm\xi$ are row vectors of $m$ complex numbers, the diagonal matrix $\bm s=\mathrm{diag}(s_1,s_2,\ldots,s_m)$ specifies the ordering for each mode, and $ D(\bm\xi)=\exp(\bm\xi\bm a^\dagger-\bm a\bm\xi^\dagger)$ is the $m$-mode displacement operator with $\bm{a}=( a_1,\ldots,a_m)$ and $\bm{a}^\dagger=( a_1^\dagger,\ldots,a_m^\dagger)^T$ being vectors of annihilation and creation operators, respectively. Note that these formulas are formally valid for more general ordering matrices, but we restrict to diagonal ones for simplicity. 

Using $\int\bdm{\alpha}\exp(\bm\alpha\bm\xi^{\dagger}-\bm\xi\bm\alpha^{\dagger})=\pi^{2m}\delta^{2m}(\bm\xi)$ and Eq.~\eqref{eq:Delta-frame}, one can verify that the \SPQD\ of density operators are normalized. In general, they can take negative values or do not represent probabilities of mutually exclusive events, which is why they are named quasi-probability distributions.

When $\bm s$ is the identity matrix $\bm I$, the \SPQD\ becomes the Glauber--Sudarshan representation which is a highly singular distribution for most quantum states. Only for classical states, which can be viewed as statistical mixtures of coherent states, the Glauber--Sudarshan $P$ function is non-negative. Other special cases are the Wigner function for $\bm s=\bm0$ that takes on negative values for some non-classical states, and the Husimi function for $\bm s=-\bm I$, in which case $W^{(-\bm I)}_\rho(\bm{\alpha})=\bra{\bm\alpha}\rho\ket{\bm\alpha}/\pi^m$, where $\ket{\bm\alpha}$ denotes an $m$-mode coherent state. We note that for the Husimi $Q$ function we always have $0\leq W^{(-\bm I)}_\rho(\bm{\alpha})\leq 1/\pi^m$.

These phase-space quasi-probability distributions allow us to obtain useful expressions for quantum kernel functions. Denoting by $\oplus$ the direct sum of matrices we have:

\begin{widetext}
\begin{lem}\label{lem:kern-expr}
The kernel function $K(x,x')=\Tr\left[\rho(x)\rho(x')\right]$ can be expressed as:
\begin{equation}\label{eq:kern-sPQD}
	K(x,x')=\pi^m\!\!\int_{\mathbb C^m}\!\mathrm d^{2m}{\bm{\gamma}} W^{(-\bm s)}_{\rho(x')}(\bm{\gamma})W^{(\bm s)}_{\rho(x)}\left(\bm{\gamma}\right),
\end{equation}
for all $\bm s=\mathrm{diag}(s_1,s_2,\ldots,s_m)$.
For a general encoding $x\mapsto\rho(x)=\rho_{\Pi(x)}/\mathrm{Tr}[\rho_{\Pi(x)}]$, it is given by $K(x,x')=\frac{\mathrm{Tr}[\rho_{\Pi(x)}\rho_{\Pi(x')}]}{\mathrm{Tr}[\rho_{\Pi(x)}]\mathrm{Tr}[\rho_{\Pi(x')}]}$, where
\begin{align}
	\label{eq:kern-sPQD-norm1}\mathrm{Tr}[\rho_{\Pi(x)}]&=\pi^k\!\!\int_{\bm\alpha\in\mathbb C^k}\!\!\!\!\!\!\mathrm d^{2k}\bm\alpha\,W_{\Pi(x)}^{(-\bm u)}(\bm\alpha)\int_{\bm\gamma\in\mathbb C^m}\!\!\!\!\!\!\!\mathrm d^{2m}\bm\gamma\,W_{U(x)\rhoin(x)U(x)^\dag}^{(\bm u\oplus\bm s)}(\bm\alpha,\bm\gamma),\\
	\label{eq:kern-sPQD-norm2}\mathrm{Tr}[\rho_{\Pi(x')}]&=\pi^k\!\!\int_{\bm\beta\in\mathbb C^k}\!\!\!\!\!\!\mathrm d^{2k}\bm\beta\,W_{\Pi(x')}^{(-\bm v)}(\bm\beta)\int_{\bm\gamma\in\mathbb C^m}\!\!\!\!\!\!\!\mathrm d^{2m}\bm\gamma\,W_{U(x')\rhoin(x')U(x')^\dag}^{(\bm v\oplus\bm t)}(\bm\beta,\bm\gamma),
\end{align}
for all $\bm s=\mathrm{diag}(s_1,s_2,\ldots,s_m)$, $\bm u=\mathrm{diag}(u_1,u_2,\ldots,u_k)$,  $\bm t=\mathrm{diag}(t_1,t_2,\ldots,t_m)$, and $\bm v=\mathrm{diag}(v_1,v_2,\ldots,v_k)$, and where
\begin{equation}\label{eq:kern-sPQDgen}
    \begin{aligned}
	   \mathrm{Tr}[\rho_{\Pi(x)}\rho_{\Pi(x')}]&=\pi^{m+2k}\!\!\int_{\bm\alpha,\bm\beta\in\mathbb C^k}\!\!\!\!\!\!\!\!\!\!\mathrm d^{2k}\bm\alpha \,\mathrm d^{2k}\bm\beta \,W_{\Pi(x)}^{(-\bm u)}(\bm\alpha)W_{\Pi(x')}^{(\bm v)}(\bm\beta)\\
    &\quad\quad\quad\times\int_{\bm\gamma\in\mathbb C^m}\!\!\!\!\!\!\!\mathrm d^{2m}\bm\gamma\,W_{U(x)\rhoin(x)U(x)^\dag}^{(\bm u\oplus\bm s)}(\bm\alpha,\bm\gamma)W_{U(x')\rhoin(x')U(x')^\dag}^{(-\bm v\oplus-\bm s)}(\bm\beta,\bm\gamma).
    \end{aligned}
\end{equation}
In the case of a unitary encoding $x\mapsto\rho(x)=U(x)\rhoin(x)U(x)^\dag$, this simplifies to:
\begin{equation}\label{eq:kern-sPQDunitary}
	K(x,x')=\pi^m\!\!\int_{\mathbb C^m}\! \mathrm d^{2m}{\bm{\gamma}} W^{(\bm s)}_{U(x)\rhoin(x)U(x)^\dag}(\bm{\gamma})W^{(-\bm s)}_{U(x')\rhoin(x')U(x')^\dag}(\bm{\gamma}).
\end{equation}
\end{lem}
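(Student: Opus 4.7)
The plan is to establish the phase-space overlap formula for two $n$-mode operators,
\begin{equation*}
\mathrm{Tr}[\rho_1\rho_2] = \pi^n\!\!\int_{\mathbb C^n}\! d^{2n}\bm\alpha\, W^{(\bm s)}_{\rho_1}(\bm\alpha)\,W^{(-\bm s)}_{\rho_2}(\bm\alpha),
\end{equation*}
as a master identity, and then to specialise it to each part of the lemma. This formula follows by expanding $\rho_1$ via Eq.~\eqref{eq:rho-PQD}, taking the trace against $\rho_2$, and recognising the resulting operator trace as the dual $(-\bm s)$-PQD through the definition Eq.~\eqref{eq:Wigner}. Since $\rho(x),\rho(x')$ are both states on $m$ modes, this directly yields Eq.~\eqref{eq:kern-sPQD}; in the unitary case $\rho(x)=U(x)\rhoin(x)U(x)^\dag$ there is no measurement normalisation, and the same formula is Eq.~\eqref{eq:kern-sPQDunitary}.

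For the measurement-based encoding, the key auxiliary quantity is the $\bm s$-PQD of the unnormalised post-measurement state $\rho_{\Pi(x)} = \mathrm{Tr}_k[(\Pi(x)\otimes\mathbb{I}_m)A]$, where $A := U(x)\rhoin(x)U(x)^\dag$ is a $(k+m)$-mode operator. I would rewrite $W^{(\bm s)}_{\rho_{\Pi(x)}}(\bm\gamma) = \mathrm{Tr}[(\Pi(x)\otimes\Delta^{(\bm s)}(\bm\gamma))A]$ and apply the master overlap formula on $k+m$ modes with block-diagonal ordering $\bm u\oplus\bm s$. The tensor factorisation of the $(-\bm u\oplus -\bm s)$-PQD of $\Pi(x)\otimes\Delta^{(\bm s)}(\bm\gamma)$ reduces the calculation to the dual-frame identity
\begin{equation*}
\mathrm{Tr}\!\bigl[\Delta^{(\bm s)}(\bm\gamma)\Delta^{(-\bm s)}(\bm\gamma')\bigr] = \pi^{-m}\delta^{2m}(\bm\gamma-\bm\gamma'),
\end{equation*}
which I would derive from Eq.~\eqref{eq:Delta-frame} using the displacement composition rule $D(\bm\xi)D(\bm\eta)=e^{(\bm\xi\bm\eta^\dag-\bm\eta\bm\xi^\dag)/2}D(\bm\xi+\bm\eta)$, the trace $\mathrm{Tr}[D(\bm\mu)]=\pi^m\delta^{2m}(\bm\mu)$, and the Fourier identity stated just after Eq.~\eqref{eq:Delta-frame}; the Gaussian ordering factors $e^{\pm\bm\xi\bm s\bm\xi^\dag/2}$ cancel exactly after enforcing $\bm\eta=-\bm\xi$. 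Collapsing the delta function gives $W^{(\bm s)}_{\rho_{\Pi(x)}}(\bm\gamma) = \pi^k\!\int d^{2k}\bm\alpha\, W^{(-\bm u)}_{\Pi(x)}(\bm\alpha)\,W^{(\bm u\oplus\bm s)}_A(\bm\alpha,\bm\gamma)$, with an analogous expression for $W^{(-\bm s)}_{\rho_{\Pi(x')}}(\bm\gamma)$ where each ordering flips sign.

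Equation~\eqref{eq:kern-sPQDgen} then follows by inserting these two partial-PQD expressions into the $m$-mode overlap formula applied to $\mathrm{Tr}[\rho_{\Pi(x)}\rho_{\Pi(x')}]$, consolidating the $\pi^m\cdot\pi^{2k}$ prefactor into the advertised $\pi^{m+2k}$. The normalisations Eqs.~\eqref{eq:kern-sPQD-norm1}--\eqref{eq:kern-sPQD-norm2} are obtained by repeating the partial-PQD computation but with $\Delta^{(\bm s)}(\bm\gamma)$ replaced by $\mathbb{I}_m$, using the variant auxiliary fact $W^{(-\bm s)}_{\mathbb{I}_m}(\bm\gamma) = \mathrm{Tr}[\Delta^{(-\bm s)}(\bm\gamma)] = \pi^{-m}$, which comes out of the same Fourier calculation applied directly to Eq.~\eqref{eq:Delta-frame}; its $\pi^{-m}$ cancels one $\pi^m$ from the $(k+m)$-mode overlap prefactor, leaving $\pi^k$. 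The fact that $\bm s$ and $\bm t$ are independent in Eqs.~\eqref{eq:kern-sPQD-norm1}--\eqref{eq:kern-sPQD-norm2} is manifest because the two states are uncoupled in the marginal normalisations.

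The main obstacle is careful sign and prefactor bookkeeping: every ordering parameter must appear paired with its opposite (the $(\bm s,-\bm s)$, $(\bm u,-\bm u)$, $(\bm v,-\bm v)$ pairings between operators and their duals) and the cancellations of $\pi^m$ factors coming from the delta function identity and from $W^{(-\bm s)}_{\mathbb{I}_m}$ must be tracked precisely to reproduce the stated prefactors $\pi^m$, $\pi^k$, and $\pi^{m+2k}$. There is no deep obstacle beyond this: once the master overlap formula and the two auxiliary $\Delta$-frame identities are in hand, everything reduces to a repeated application of the same template.
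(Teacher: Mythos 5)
Your proposal is correct and follows essentially the same route as the paper: both establish the overlap master formula from Eq.~\eqref{eq:rho-PQD}, derive the key intermediate identity $W^{(\bm s)}_{\rho_{\Pi(x)}}(\bm\gamma)=\pi^k\!\int\mathrm d^{2k}\bm\alpha\,W^{(-\bm u)}_{\Pi(x)}(\bm\alpha)W^{(\bm u\oplus\bm s)}_{U(x)\rhoin(x)U(x)^\dag}(\bm\alpha,\bm\gamma)$, and assemble Eqs.~\eqref{eq:kern-sPQD-norm1}--\eqref{eq:kern-sPQDgen} from it. The only (cosmetic) difference is that the paper obtains this identity by expanding $\Pi(x)$ directly in the dual frame via Eq.~\eqref{eq:rho-PQD}, whereas you apply the $(k+m)$-mode overlap formula and then collapse a delta function from the frame orthogonality $\Tr[\Delta^{(\bm s)}(\bm\gamma)\Delta^{(-\bm s)}(\bm\gamma')]=\pi^{-m}\delta^{2m}(\bm\gamma-\bm\gamma')$ --- the two are equivalent and your prefactor bookkeeping is right.
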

\end{widetext}

\begin{proof}
The expression in Eq.~(\ref{eq:kern-sPQD}) is a direct consequence of Eq.~(\ref{eq:rho-PQD}). For the expressions in Eqs.~(\ref{eq:kern-sPQD-norm1}), (\ref{eq:kern-sPQD-norm2}) and (\ref{eq:kern-sPQDgen}), let $\sigma$ be a density operator over $k+m$ modes, $\Pi$ a POVM element over $k$ modes and $\rho:=\Tr_k[(\Pi\otimes I_m)\sigma]$. For all $\bm\gamma\in\mathbb C^m$, all $\bm s=\mathrm{diag}(s_1,\dots,s_m)$ and all $\bm u=\mathrm{diag}(u_1,\dots,u_k)$,
\begin{align}
	\nonumber W_\rho^{(\bm s)}(\bm\gamma)&=\Tr[\rho\Delta^{(\bm s)}(\bm\gamma)]\\
	\nonumber&=\Tr[\Tr_k[(\Pi\otimes I_m)\sigma]\Delta^{(\bm s)}(\bm\gamma)]\\
	\nonumber&=\Tr[\sigma(\Pi\otimes\Delta^{(\bm s)}(\bm\gamma))]\\
	&=\pi^k\!\!\int_{\bm\alpha\in\mathbb C^k}\!\!\!\!\!\mathrm d^{2k}\bm\alpha W_\Pi^{(-\bm u)}\!(\bm\alpha)\\
 \nonumber&\quad\quad\quad\times\Tr[\sigma(\Delta^{(\bm u)}\!(\bm\alpha)\!\otimes\!\Delta^{(\bm s)}\!(\bm\gamma))]\\
	\nonumber&=\pi^k\!\!\int_{\bm\alpha\in\mathbb C^k}\!\!\!\!\!\mathrm d^{2k}\bm\alpha W_\Pi^{(-\bm u)}(\bm\alpha)W_\sigma^{(\bm u\oplus\bm s)}(\bm\alpha,\bm\gamma),
\end{align}
where we used Eq.~(\ref{eq:rho-PQD}) in the fourth line and Eq.~(\ref{eq:Wigner}) in the first and last lines. With Eq.~(\ref{eq:post-meas-state}), integrating this relation for $\smash{W_{\rho_{\Pi(x)}}^{(\bm s)}}$ or $\smash{W_{\rho_{\Pi(x')}}^{(\bm t)}}$ yields Eq.~(\ref{eq:kern-sPQD-norm1}) or (\ref{eq:kern-sPQD-norm2}), respectively, while using this relation twice to expand $\smash{W_{\rho_{\Pi(x)}}^{(\bm s)}}$ and $\smash{W_{\rho_{\Pi(x')}}^{(-\bm s)}}$ yields Eq.~(\ref{eq:kern-sPQDgen}). Finally, the last expression in Eq.~(\ref{eq:kern-sPQDunitary}) is a direct consequence of Eq.~(\ref{eq:kern-sPQD}).
\end{proof}

\noindent Note that the above Lemma involves phase-space representations of POVM elements that are not necessarily trace-class operators. These should be understood as defined formally in the sense of distributions, i.e., by their inner product with classes of well-behaved functions, which is how they will be employed hereafter.

The rest of the paper is devoted to showing how the general expressions obtained in Lemma~\ref{lem:kern-expr} enable direct estimation of kernel functions using classical Monte Carlo methods, when various assumptions on the \SPQD s are involved. 

\section{Classical estimation of kernel functions}
\label{sec:est}

In this section, we describe two classical algorithms using Monte Carlo sampling methods to estimate kernel functions based on the expressions obtained in Lemma~\ref{lem:kern-expr}. The first Algorithm \ref{algo:MCQD-kern1} is obtained by treating kernel estimation as a probability estimation task, and following the approach of~\cite{pashayan2015estimating,Stahlke,Chakh2017} (see Fig.\ \ref{fig:algo1}). The second Algorithm \ref{algo:MCQD-kern2} is a non-trivial generalisation which takes advantage of the specifics of state preparation, and uses the first algorithm as a subroutine (see Fig.\ \ref{fig:algo2}).

\subsection{Monte Carlo estimation}
\label{sec:MCest}

Monte Carlo methods are standard algorithms for estimating classical expectation values: given a probability density function $y\mapsto P(y)$ and a bounded function $f$ over $\mathbb R^n$, the expectation value $\mathbb E_P[f]=\int_{\mathbb R^n}f(y)P(y)dy$ over the probability density $P$ can be estimated by computing the mean $\frac1N\sum_{j=1}^Nf(y_j)$ of the function $f$ over a finite number $N$ of samples $y_1,\dots,y_N$ from the probability density $P$.
The error associated with this estimation and the probability of failure are related through Hoeffding's inequality~\cite{Hoeffding}:
\begin{equation}\label{eq:Hoef}
	\text{Pr}\bigg[\bigg|\frac{1}{N}\sum_{j=1}^{N}f(y_j)-\mathbb E_P[f]\bigg|>\epsilon\bigg]\leq 2 \exp\!\bigg(\!{-\frac{2N\epsilon^{2}}{\mathcal R(f)^{2}}}\bigg),
\end{equation}
where $\mathcal R(f)= \max_{y\in\mathbb R^n}f(y)-\min_{y\in\mathbb R^n}f(y)$ is the range of the function $f$. Hence, if the number of samples satisfies $N\ge\frac1{2\epsilon^{2}}\mathcal R(f)^2\ln\!\left(\frac2\delta\right)$, then with probability at least $1-\delta$ the classical estimate of the expectation value $\mathbb E_P[f]$ has additive error less than $\epsilon$. 

In general, the kernel functions we wish to estimate are of the form $\Tr[\rho A]$, where $\rho$ and $A$ are bounded, positive operators (see Eq.~(\ref{eq:kern-sPQD})). In this case, the kernel function can be viewed as an expectation value over \PQD s rather than true probability distributions. A simple extension of the above method allows us to estimate such quantities: following the approach of~\cite{pashayan2015estimating,Stahlke,Chakh2017}, given the \TPQD , $\bm\mu\mapsto W^{(\bm{t})}_\rho(\bm\mu)$ over $\mathbb C^n$, representing an operator $\rho$, we define the probability distribution
\begin{equation}\label{eq:prob-dis}
	P(\bm{\mu}):=\frac{1}{\mathcal N\big(W^{(\bm{t})}_\rho\big)} \big\vert W^{(\bm{t})}_\rho(\bm\mu) \big\vert,
\end{equation}   
where	$\mathcal N(W^{(\bm{t})}_\rho):=\int_{\mathbb C^n}\bdn\mu\, \big|W^{(\bm{t})}_\rho(\bm\mu)\big|$ is the \textit{negative volume}, a measure of negativity in the \TPQD. We note that $\mathcal N(W^{(\bm{t})}_\rho)=1$ if the distribution is non-negative. Next, using the quasi-probability distribution $W^{(-\bm{t})}_A(\bm\mu)$ representing operator $A$, we introduce the estimator 
\begin{equation}\label{eq:estimator}
	E(\bm\mu):=\pi^n\mathcal N\big(W^{(\bm{t})}_\rho\big) \text{sgn}\big[W^{(\bm{t})}_\rho(\bm\mu)\big]\, W^{(-\bm{t})}_A(\bm\mu),
\end{equation} 
where $\text{sgn}[\cdot]=\pm1$ is the sign function.  By construction, the above quantity provides an unbiased estimator for $\Tr[\rho A]$ through Eq.~\eqref{eq:rho-PQD}:
\begin{equation}
	\mathbb E_P[E]=\int_{\mathbb C^n}\! \bdn{\bm{\mu}} P(\bm{\mu})\, E(\bm\mu)=\Tr[\rho A].
\end{equation}
Therefore, assuming that it is possible to evaluate $E$ (and in particular to compute $\mathcal{N}(W^{(\bm{t})}_\rho)$) and to generate samples from the probability distribution $P$ efficiently classically, $\Tr[\rho A]$ can be estimated using the following procedure: with input \TPQD s $W^{(\bm{t})}_\rho$ and $W^{(-\bm{t})}_A$ for a density operator $\rho$ and a bounded, positive operator $A$, respectively,
\begin{enumerate}[(i)]
	\item randomly sample $N$ outcomes $\bm\mu_1,\dots,\bm\mu_N$ from the probability density $P(\bm{\mu})$, defined by Eq.~\eqref{eq:prob-dis};
	\item using Eq.~\eqref{eq:estimator}, compute the corresponding values of the estimator $E(\bm\mu_1), \dots, E(\bm\mu_N)$;
	\item output the sample mean $\frac1N\sum_{j=1}^NE(\bm\mu_j)$.
\end{enumerate}

\noindent The error associated with this estimation and the probability of failure are once again related through Hoeffding's inequality \eqref{eq:Hoef}, for the estimator $E$. Hence, if the number of samples satisfies
\begin{equation}\label{eq:N-samples}
	N\ge\frac1{2\epsilon^{2}}\mathcal R(E)^2\ln\!\left(\frac2\delta\right)\!,
\end{equation}
then with probability at least $1-\delta$ the classical estimate of the expectation value has additive error less than $\epsilon$. Here, the range of the estimator is bounded as
\begin{equation}\label{eq:range-E}
\mathcal R(E)\le2\mathcal N\big(W^{(\bm{t})}_\rho\big) \mathcal R(W^{(-\bm{t})}_A),
\end{equation}
with $\mathcal R(W^{(-\bm{t})}_A)=\pi^n[\max_{\bm\mu}W^{(-\bm{t})}_A(\bm\mu)-\min_{\bm\mu}W^{(-\bm{t})}_A(\bm\mu)]$.
This implies that the complexity of the estimation procedure, determined by the number of samples to achieve a desired precision,  depends directly on the range of the function $W^{(-\bm{t})}_A$ and the negative volume of the \PQD\ $W^{(\bm{t})}_\rho$. Note that $\pi^n\max_{\bm\mu}|W^{(-\bm{t})}_A(\bm\mu)|\le\mathcal R(W^{(-\bm{t})}_A)\le2\pi^n\max_{\bm\mu}|W^{(-\bm{t})}_A(\bm\mu)|$, so we can alternatively use the extremal values rather than the range without affecting the scaling of the sample complexity.

Using this estimation procedure for $\Tr[\rho A]$ to estimate the overlaps in Lemma~\ref{lem:kern-expr} provides two classical algorithms for kernel estimation.

\begin{algo}\label{algo:MCQD-kern1}
Choosing $\rho=\rho(x)$ and $A=\rho(x')$, the above estimation procedure provides an additive estimate of the kernel $K(x,x')=\Tr[\rho(x)\rho(x')]$.  
\end{algo}

\noindent We provide a schematic depiction of this algorithm in Fig.~\ref{fig:algo1}. The correctness of Algorithm~\ref{algo:MCQD-kern1} is a direct consequence of Eq.~(\ref{eq:kern-sPQD}) in Lemma~\ref{lem:kern-expr} and its efficiency is given by Eqs.~(\ref{eq:N-samples}) and (\ref{eq:range-E}), which can be optimized by the choice of \SPQD.

Alternatively, we may estimate independently $\Tr[\rho_{\Pi(x)}]$, $\Tr[\rho_{\Pi(x')}]$ and $\Tr[\rho_{\Pi(x)}\rho_{\Pi(x')}]$:

\begin{algo}\label{algo:MCQD-kern2}
\begin{enumerate}[(i)]
    \item Choosing $\rho$ as the $k$-mode partial trace of $U(x)\rhoin(x)U(x)^\dag$ over the last $m$ modes and $A=\Pi(x)$, the above estimation procedure provides an additive estimate of $\Tr[\rho_{\Pi(x)}]$.
    \item Choosing $\rho$ as the $k$-mode partial trace of $U(x')\rhoin(x')U(x')^\dag$ over the last $m$ modes and $A=\Pi(x')$, the above estimation procedure provides an additive estimate of $\Tr[\rho_{\Pi(x')}]$.
    \item Choosing $\rho=\sigma(x,x')$ as the $(2k)$-mode overlap of the last $m$ modes of the $(k+m)$-mode states $U(x)\rhoin(x)U(x)^\dag$ and $U(x')\rhoin(x')U(x')^\dag$ and choosing $A=\Pi(x)\otimes\Pi(x')$, the above estimation procedure provides an additive estimate of $\Tr[\rho_{\Pi(x)}\rho_{\Pi(x')}]$.
    \item Computing the ratio of these three estimates provides an additive estimate of the kernel $K(x,x)=\Tr[\rho_{\Pi(x)}\rho_{\Pi(x')}]/(\Tr[\rho_{\Pi(x)}]\Tr[\rho_{\Pi(x')}])$.
\end{enumerate}
\end{algo}

\noindent We provide a schematic depiction of this algorithm in Fig.~\ref{fig:algo2}. The correctness of Algorithm~\ref{algo:MCQD-kern2} is a consequence Eqs.~(\ref{eq:kern-sPQD-norm1}), (\ref{eq:kern-sPQD-norm2}) in Lemma~\ref{lem:kern-expr} for the estimations of the first two terms and of Eq.~(\ref{eq:kern-sPQDgen}) in Lemma~\ref{lem:kern-expr} for the estimation of the third term. Once again, the efficiency of each step is given by Eqs.~(\ref{eq:N-samples}) and (\ref{eq:range-E}), and can be optimized by the choice of \SPQD. 

\noindent  

\begin{figure}[t]
	\begin{center}
		\includegraphics[width=1\linewidth]{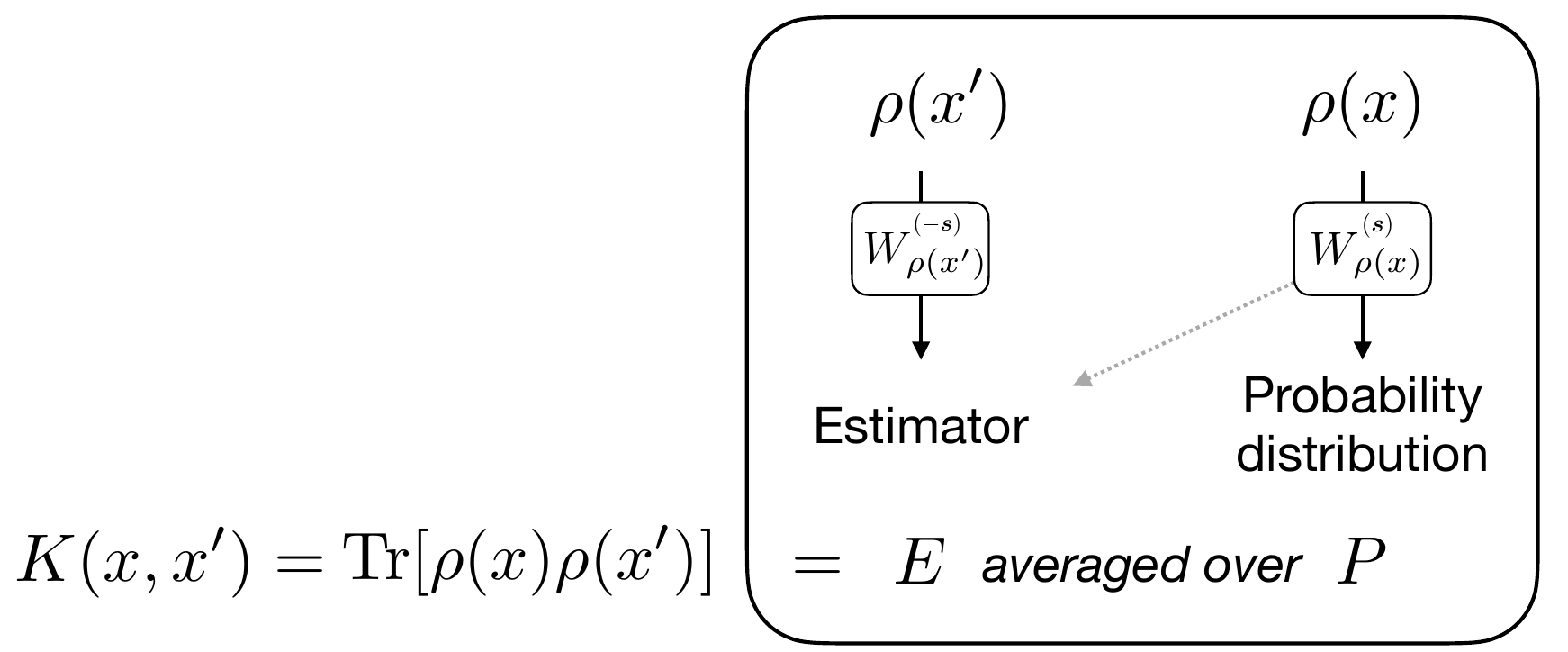}
		\caption{Classical estimation of quantum state overlap, following Algorithm~\ref{algo:MCQD-kern1}. The dashed arrow indicates that the negativity information of $W_{\rho(x')}^{(\bm s)}$ is also being used to define the estimator $E$ according to Eq.~(\ref{eq:estimator}).}
		\label{fig:algo1}
	\end{center}
\end{figure}

\begin{figure*}[!t]
	\begin{center}
		\includegraphics[width=0.9\linewidth]{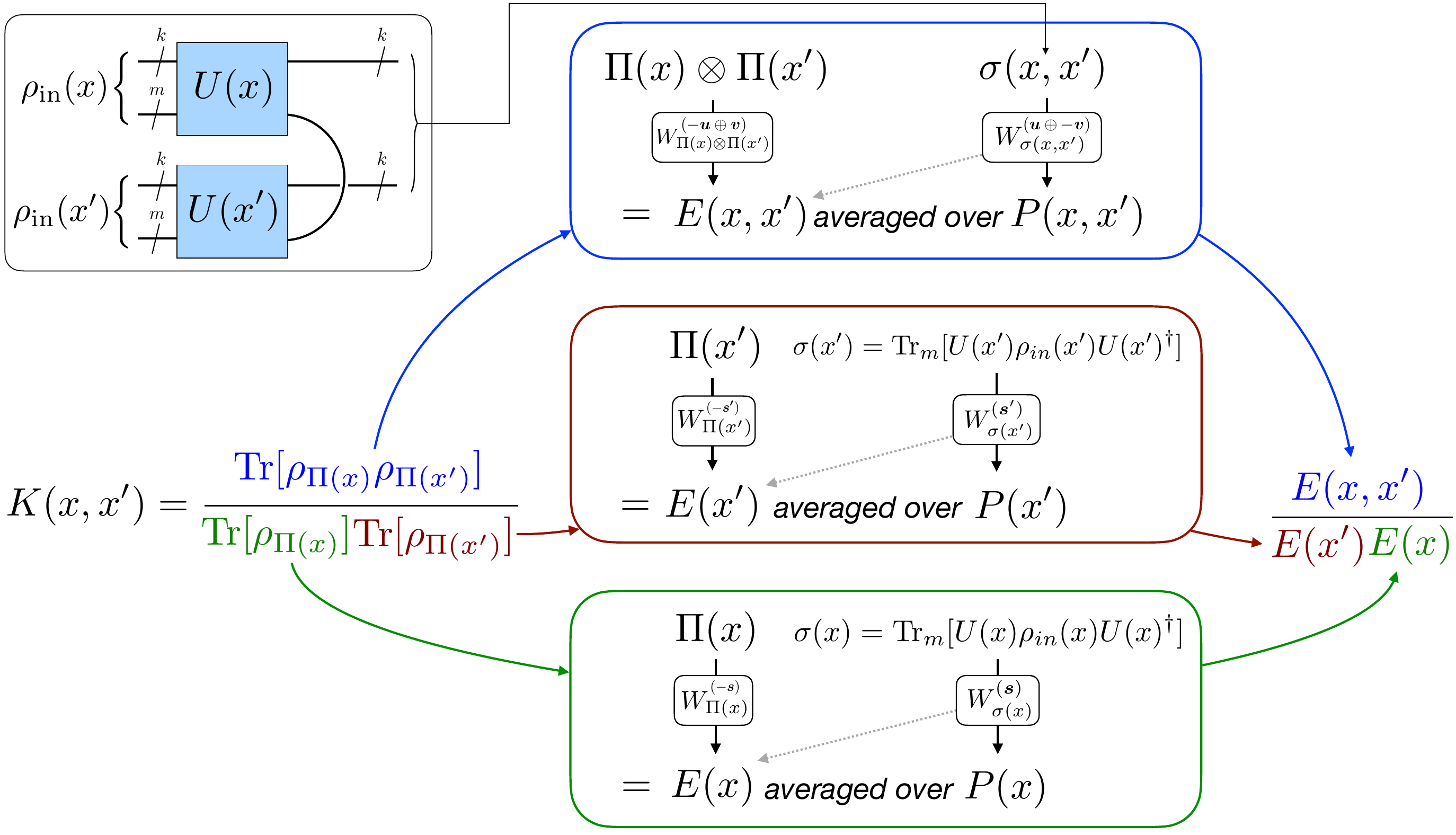}
		\caption{Classical estimation of quantum state overlap, following Algorithm~\ref{algo:MCQD-kern2}. The algorithm involves three separate estimations, each being similar to Algorithm~\ref{algo:MCQD-kern1} (see Fig.~\ref{fig:algo1}). The dashed arrows indicate that the negativity information of the corresponding \SPQD s $W_{\sigma(x)}^{(\bm s)}$, $W_{\sigma(x')}^{(\bm s')}$ and $W_{\sigma(x,x')}^{(\bm u\oplus-\bm v)}$ are also being used to define the corresponding estimator. The $k$-mode states $\sigma(x)$ and $\sigma(x')$ involved in the estimation of the denominator are defined as the partial trace of the last $m$ modes of the $(k+m)$-mode states $U(x)\rhoin(x)U(x)^\dag$ and $U(x')\rhoin(x')U(x')^\dag$, respectively, while the $(2k)$-mode state $\sigma(x,x')$ involved in the estimation of the numerator is defined as the partial overlap of the last $m$ modes of the $(k+m)$-mode states $U(x)\rhoin(x)U(x)^\dag$ and $U(x')\rhoin(x')U(x')^\dag$, as shown in the top-left circuit picture, where a partial transpose is omitted for brevity.}
		\label{fig:algo2}
	\end{center}
\end{figure*}

Note that in Algorithm~\ref{algo:MCQD-kern1}, $\rho(x)$ is used as a source of samples, while the estimator is mainly built from $\rho(x')$ (see Fig.~\ref{fig:algo1}).  On the other hand, in Algorithm~\ref{algo:MCQD-kern2}, both $U(x)\rhoin(x)U(x)^\dag$ and $U(x')\rhoin(x')U(x')^\dag$ are used as sources of samples, while the estimators are mainly built from $\Pi(x)$ and $\Pi(x')$ (see Fig.~\ref{fig:algo2}). The first method works well for generic cases, while the second method takes the advantage of the specific state preparation. We derive generic conditions for the efficiency of both methods in the next section, and we give concrete applications of these algorithms in Section~\ref{sec:examples}.

\subsection{Conditions for efficiency} 

The efficiency of the previous methods is based on assuming that all the \PQD s and their negative volume can be efficiently computed, and classical efficient sampling from the probability density is possible. However, if efficient computation of the \PQD s of the data encoding states is not possible, one may express the \PQD s in terms of the initial product states and the transition function describing the encoding circuit, as we show in Appendix~\ref{app:comp}.

We now present sufficient conditions for the efficiency of the above estimation procedures. 
Following either method, we obtain a classical estimate of the kernel function $K(x,x')$ with additive error $\epsilon=1/\mathrm{poly}(m)$, with an exponentially small probability of failure, using $N=\mathrm{poly}(m)$ samples, provided that both the negative volume $\mathcal N\big(W^{(\bm{t})}_\rho\big)$ and the range $\mathcal R(W^{(-\bm{t})}_A)$, bounding the range of the estimator $\mathcal R(E)$ as in Eq.~\eqref{eq:range-E}, are polynomial in $m$. Thus, to check this condition, one should look for ordering parameters such that $\mathcal R(E)$ is minimized.

Notice that having a nonnegative \PQD\ does not necessarily imply efficient classical estimation of the kernel function, since the other contributing range factor must also be considered in the minimization. In fact, the range of \PQD\ is also related to the negativity involved in kernel estimation experimental procedures. As $A$ is a bounded positive operator, it can be viewed as a POVM element of a two-outcome measurement $\{A,\bar{A}=I-A\}$. Therefore, estimation of $\Tr[\rho \bar A]$ results in estimation of the complement result of the measurement, $\Tr[\rho A]=1-\Tr[\rho \bar A]$.
However, since $W^{(-\bm t)}_{\bar A}(\bm{\mu})=1/\pi^m -W^{(-\bm t)}_A(\bm\mu)$, 
we can see that if the range of $W^{(-\bm t)}_A(\bm{\mu})$ is greater than $1/\pi^m$, then either it takes negative values, or the \PQD\ $W^{(-\bm t)}_{\bar A}(\bm{\mu})$ involved in the estimation of $\Tr[\rho \bar A]$ becomes negative. In other terms, the excess range of a \PQD\ is a signature of phase-space negativity contributing to the sample complexity of our classical algorithms.


To analyze how the sampling complexity depends on the ordering parameters, we derive an upper bound on the range 
$\mathcal R(W^{(-\bm{t})}_A)$. Let us first consider a single-mode operator $A_1$ with PQD $W^{(-t)}_{A_1}({\mu})$. The spectral decomposition of the single-mode Hermitian operator $\Delta^{(-t)}(\mu)$ is given by~\cite{CahGla69}
\begin{align}\label{eq:spec-single-Delta}
\Delta^{(-t)}(\mu)=\frac{2}{\pi(1+t)} \sum_{n=0}^{\infty}\! \bigg(\!\frac{t-1}{t+1}\!\bigg)^{n}\!\!
\ket{n,\mu}\bra{n,\mu},
\end{align}    
where $\ket{n,\mu}=D(\mu)\ket{n}$ are displaced number states. If $t\geq 0$, we have $(t-1)/(t+1)\leq1$. As the operator norm $\|A_1\|\leq1$, the maximum and the minimum values of the single-mode \negtPQD\ are determined by the largest and the smallest eigenvalues in Eq.~\eqref{eq:spec-single-Delta}:
\begin{equation}\label{eq:bound-single-Ws}
-\frac{2(1-t)}{\pi (1+t)^2}\leq  W^{(-t)}_{A_1}({\mu})\leq \frac{2}{\pi (1+t)}.
\end{equation}

Therefore, for $t=0$ the values of the Wigner function are between $-2/\pi$ and $2/\pi$, and as $t$ decreases the length of the interval becomes smaller, down to $1/\pi$ for the Husimi function when $t=1$. Notice that for $t<0$ the operator $\Delta^{(-t)}(\mu)$ has infinite eigenvalues, and therefore the values of \negtPQD are not bounded in general. 

Generalizing Eq.~\eqref{eq:bound-single-Ws} to the $m$-mode case, we have $\Delta^{(-\bm t)}(\bm\mu)=\otimes_{j=1}^{m}\Delta^{(-t_j)}(\mu_j)$,  and for $t_j\ge0$, 
\begin{equation}\label{eq:bound-multi-Ws}
\bigg(\!\frac{t_{\text{min}}-1}{t_{\text{min}}+1}\!\bigg)\!\prod_{j=1}^{m}\!\frac{2}{\pi(1+t_j)}\leq W^{(-\bm t)}_{A}(\bm{\gamma})\leq\! \prod_{j=1}^{m}\!\frac{2}{\pi(1+t_j)},\nonumber
\end{equation} 
where $t_{\text{min}}=\min_j t_j\ge0$ is the smallest ordering parameter. Using this inequality we find an upper bound for the interval length:
\begin{equation}\label{eq:gen-bound-range}
\mathcal{R}\big(W^{(-\bm t)}_{A}\big)\leq \frac{2}{t_{\text{min}}+1}\prod_{j=1}^{m}\!\frac{2}{t_j+1}\leq\left(\frac{2}{t_{\text{min}}+1}\right)^{m+1}\!\!\!\!\!\!\!\!\!\!.
\end{equation} 
This bound is $\smash{\mathcal{R}\big(W^{(-\bm t)}_{A}\big)\leq2^{m+1}}$ for the Wigner function, but for the Husimi function becomes $\mathcal{R}(W^{(-\bm I)}_{A})\leq1$. Therefore, for classical states with $\smash{\mathcal N\big(W^{(\bm{t})}_\rho\big)=1}$, an optimal choice for the ordering parameters is $\bm t=\bm I$ that gives $\mathcal{R}(E)\leq1$. This implies in particular that, using Algorithm~\ref{algo:MCQD-kern1}, the kernel function for classical states can be estimated efficiently classically.

For Algorithm~\ref{algo:MCQD-kern2}, the first three steps are instances of Algorithm~\ref{algo:MCQD-kern1} and their efficiency is subject to the previous considerations. Furthermore, we show in Appendix~\ref{app:ratio} that for any quantum-efficient encoding, combining three independent estimates of $\Tr[\rho_{\Pi(x)}]$, $\Tr[\rho_{\Pi(x')}]$ and $\Tr[\rho_{\Pi(x)}\rho_{\Pi(x')}]$ that are inverse-polynomially precise with exponentially small probability of failure provides an estimate of $K(x,x')=\Tr[\rho_{\Pi(x)}\rho_{\Pi(x')}]/(\Tr[\rho_{\Pi(x)}]\Tr[\rho_{\Pi(x')}])$ that is also inverse-polynomially precise with exponentially small probability of failure, which shows the efficiency of the last step of Algorithm~\ref{algo:MCQD-kern2}.

\section{Examples}
\label{sec:examples}

In this section, we apply our previous findings to various examples. Firstly, we illustrate the use of Algorithm ~\ref{algo:MCQD-kern1} in Section~\ref{sec:LON} for estimating quantum kernels based on output states of linear optical networks. Secondly, in Section~\ref{sec:G}, we show that there are cases where direct classical computation of the quantum kernel is efficient, such as for kernels based on Gaussian states. Finally, we illustrate the use of Algorithm~\ref{algo:MCQD-kern2} in Section~\ref{sec:part-meas-G} for a large class of quantum kernels based on non-Gaussian states obtained through the measurement of a subset of  modes of Gaussian states, including adaptive measurement strategies.


\subsection{Output states of linear optical networks} 
\label{sec:LON}


Linear optical networks (LON) are of particular interest because they can be simply realized, using passive optical elements such as beam splitters and phase shifters, yet they provide the underlying transformations in boson sampling problems~\cite{aaronson2011computational} that are believed to be classically hard to simulate. In boson sampling, one generates samples from the output probability distribution at the output of an LON, when single-photons are injected to the input. It is interesting to note, however, that output probabilities in boson sampling can be efficiently estimated classically~\cite{aaronson2011computational,Gurvits}. Also, classical algorithms have been recently proposed to approximate output probabilities of boson sampling and Gaussian boson sampling using the \SPQD s~\cite{Lim2023}.

Here, we consider a class of data-encoding states that are prepared by using LONs. In an ideal situation, a lossless LON is described by a unitary transformation $U_{\text{LON}}^\dagger(x)$ that can be defined by its action on the $m$-mode displacement operator, $U_{\text{LON}}^\dagger(x) D(\bm\xi) U_{\text{LON}}(x)=D(\bm\xi V(x))$, where $V(x)$ is an $m\times m$ unitary transfer matrix associated with the LON. Using this relation in Eqs.~(\ref{eq:Wigner}) and \eqref{eq:Delta-frame}, one can find the \SPQD\ of the output state of the LON $\rho(x)=U_{\text{LON}}(x)\rhoin\, U_{\text{LON}}^\dagger(x)$,
\begin{equation}\label{eq:PQD-lossless-LON}
W^{(\bm{s})}_{\rho(x)}\left(\bm{\alpha}\right)=W^{(V^\dagger(x)\bm{s}V(x))}_{\rhoin}\!\big(\bm\alpha V(x)\big),
\end{equation}
where $W^{(\bm{s})}_{\rhoin}(\bm\alpha)$ is the \SPQD\ of the input state $\rhoin$. This relation shows that, given the \SPQD\ of the input state, one can efficiently compute the \SPQD\ of the data-encoding states $\rho(x)$, and use Algorithm~\ref{algo:MCQD-kern1} to estimate the kernel function. 

In practice, however, any LON is lossy and hence cannot be described by a unitary transfer matrix. An interesting feature of our formalism is that it provides a practical way to take losses into account, as common sources of error, and check their effects on the negativity of the \PQD s, which is directly related to the efficiency of our kernel estimation algorithms. In general, any lossy LON can be modeled as a quantum operation consisting of single-mode loss channels sandwiched between two lossless LONs~\cite{rahimi2020situ}
\begin{equation}\label{eq:Lossy-LON}
\rho(x)=U_{\text{LON}}(x)\Lambda_{\bm\eta}\big(\tilde{U}_{\text{LON}}(x) \rhoin \tilde{U}^{\dagger}_{\text{LON}}(x)\big)U^{\dagger}_{\rm{LON}}(x).
\end{equation}
Here, $\Lambda_{\bm\eta}=\Lambda_{\eta_1}\otimes\dots\otimes\Lambda_{\eta_m}$ is an $m$-mode loss channel with parameters $0\leq\eta_j\leq1$. In Appendix~\ref{app:loss}, we have derived the relation between the \SPQD s of the output and input states of loss channels      
\begin{equation}\label{eq:PQD-under-loss}
W^{(\bm s)}_{\Lambda_{\bm\eta}(\rho)}(\bm{\alpha})=\frac{1}{\det \bm\eta} W^{(\bm t)}_{\rho} \big(\bm{\alpha}\bm\eta^{-1/2}\big),
\end{equation}
where $I-\bm s=\bm\eta(I-\bm{t})$ and $\bm\eta=\text{diag}(\eta_1,\dots,\eta_m)$ in the case of diagonal matrices of ordering parameters. This relation, together with Eq.~\eqref{eq:PQD-lossless-LON} enable us to investigate the effect of losses on the negative volume and range of the \PQD s of data-encoding states \eqref{eq:Lossy-LON}, and check when efficient estimation of the kernel function is possible. Note that an alternative approach presented in~\cite{rahimi2016sufficient} is to express the \PQD\ of the state~\eqref{eq:Lossy-LON} in terms of the \PQD\ of the input state and the Gaussian transition function associated with the lossy LON (see Eq.\eqref{eq:pqd-trans-func} of Appendix A). However, we emphasize that our classical estimation algorithms are inherently different from the classical sampling algorithms in~\cite{rahimi2016sufficient}, despite the similarities between the formalisms. For the sampling algorithms to work, all the \PQD s must be nonnegative, while our estimation algorithms always work and can be efficient even in the presence of negativity.

To further illustrate our formalism, let us assume that our initial state is product $\rhoin=\bigotimes_{k=1}^{m}\rho_k$ and losses can be modeled in terms of single-mode loss channels at the input of lossless LONs. We then consider examples of single-photon states and cat states as the initial states. Under these assumptions, Eq.~\eqref{eq:Lossy-LON} becomes $\rho(x)=U_{\text{LON}}(x)\bigotimes_{k=1}^{m} \tilde\rho_k U^{\dagger}_{\rm{LON}}(x)$ where $\tilde\rho_k=\Lambda_{\eta_k}(\rho_k)$. Note that losses in state preparation can also be incorporated into these loss channels at the input of LONs. In this scenario, assuming that $\bm s=s\bm{I}$ in Eq.~\eqref{eq:PQD-lossless-LON}, 
the \SPQD\ of the state $\rho(x)$ can be written as
\begin{align}\label{eq:sPQD-output-LON}
W^{(s\bm{I})}_{\rho(x)}\left(\bm{\alpha}\right)=\prod_{k=1}^{m}\! W^{(s)}_{\tilde\rho_k}\!\bigg(\sum_{j=1}^{m}\alpha_j V_{jk}(x)\bigg),
\end{align}  
where $W^{(s)}_{\tilde\rho_k}(\beta_k)$ is the \sPQD\ of the state injected into the $k$th input port of the lossless LON. In this case, the \SPQD s can always be efficiently computed and we can use our Algorithm~\ref{algo:MCQD-kern1}  
to estimate the kernel function
\begin{align*}
	K(x, x') =\pi^m\!\!\int\! \bdm{\bm{\alpha}} W^{(s\bm{I})}_{\rho(x)}\!\!\left(\bm{\alpha}\right) W^{(-s\bm{I})}_{\rho(x')}\!\!\left(\bm{\alpha}\right).
\end{align*}

By using Eq.~\eqref{eq:sPQD-output-LON} we can compute the probability distribution in Eq.~\eqref{eq:prob-dis}: 
\begin{align}\label{eq:prob-LON-example}
P(\bm\alpha)=\prod_{k=1}^{m}\!\frac{1}{\mathcal{N}\big(W^{(s)}_{\tilde\rho_k}\big)}\bigg\vert W^{(s)}_{\tilde\rho_k}\!\bigg(\sum_{j=1}^{m}\alpha_j V_{jk}(x)\!\bigg)\bigg\vert.
\end{align}
Here, $\mathcal{N}\big(W^{(s)}_{\tilde\rho_k}\big)=\int \text{d}^2\bm\beta_k \vert W^{(s)}_{\tilde\rho_k}(\beta_k) \vert$. Notice that $\mathcal{N}\big(W^{(s\bm{I})}_{\rho(x)}\big)=\prod_{k=1}^{m} \mathcal{N}\big(W^{(s)}_{\tilde\rho_k}\big)$ since lossless LONs preserve the negative volume and hence the non-classicality of quantum states~\cite{albarelli2018resource}. We can also efficiently generate samples from the probability distribution~\eqref{eq:prob-LON-example} by first sampling the components of $N$ complex vectors $\bm\beta_1, \dots, \bm\beta_{N}$ from individual probability distributions $\vert W^{(s)}_{\tilde\rho_k}(\beta_k) \vert/\mathcal{N}\big(W^{(s)}_{\tilde\rho_k}\big)$, $k=1, \dots, N$, and then calculating $\bm\alpha_j=\bm\beta_j V^\dagger(x) $. Given these samples, we then obtain the corresponding values for the \SPQD\ of $\rho(x')$,  $W^{(-s\bm{I})}_{\rhoin}\!\big(\bm{\alpha}_j V(x')\big)$, and through Eq.~(\ref{eq:estimator}) calculate the sample mean $\frac{1}{N}\sum_{j=1}^{N} E(\bm\alpha_j)$.

The sampling complexity for this class of states is described by
\begin{align}\label{eq:M-bound-LON}
	\begin{split}
\mathcal{R}(E)&\leq 2 \pi^m \mathcal{N}\big(W^{(s\bm{I})}_{\rho(x)}\big) \max_{\bm\alpha} \big\vert W^{(-s\bm I)}_{\rho(x')}(\bm\alpha)\big\vert\\
&=2\prod_{k=1}^{m} \left(\pi  \mathcal{N}\big(W^{(s)}_{\tilde\rho_k}\big) \max_{\alpha} \big\vert W^{(-s)}_{\tilde\rho_k}(\alpha)\big\vert\right)\!,
	\end{split}
\end{align} 
which is completely independent of the data-encoding LON unitary operations and depends only on the \sPQD\ of the input states. Therefore, for a given input state $\rhoin=\bigotimes_{k=1}^{m}\rho_k$ and loss parameters, one can minimize $\mathcal R(E)$ by finding an optimal ordering parameter $s$ and determine the scaling for the number of required samples. Notice, however, that one could use a more optimal estimation procedure with different ordering parameters for each input mode, in which case this would depend on the LON description. Using Eq.~\eqref{eq:M-bound-LON}, it is easy to verify that if all states $\tilde\rho_k$ are classical, then for the optimal choice of $s=1$ we have $\mathcal{R}(E)=1$.

\subsubsection{Example 1: Single-photon states}
Let us consider the case where all the input states are single-photon states $\rho_k=\ketbra{1}$ in the above formalism. Hence, the input states of lossless LONs are $\tilde\rho_k=\Lambda_{\eta_k}(\ketbra{1})=(1-\eta_k)\ketbra{0}+\eta_k\ketbra{1}$. Using the single-mode version of Eq.~\eqref{eq:Delta-frame}, the \sPQD\ of $\tilde\rho_k$ is given by
\begin{align*}
	W^{(s)}_{\tilde\rho_k}(\alpha)\!=\!\frac{2(1-s)(1-s-2\eta_k)+8\eta_k|\alpha|^2}{\pi (1-s)^3}e^{-2|\alpha|^2/(1-s)},
\end{align*}
which is non-negative if $s\leq 1-2\eta_k$. Moreover, if $s> 1-2\eta_k$, the negative volume is given by
\begin{equation}\label{eq:neg-rho_k}
 \mathcal{N}\big(W^{(s)}_{\tilde\rho_k}\big)=\frac{4 \eta_k}{1-s}\exp\!\bigg(\!\frac{1-s-2\eta_k}{2 \eta_k}\!\bigg)-1.
\end{equation}
To examine the bound (\ref{eq:M-bound-LON}), we should also consider the maximum value of $	W^{(-s)}_{\tilde\rho_k}(\alpha)$. This function has two extremal values at $|\alpha_0|^2=0$ and $|\alpha_1|^2=(1+s)(4\eta_k-1-s)/(2\eta_k)$, which are given by
\begin{align*}
	W^{(s)}_{\tilde\rho_k}(0)=\frac{2 (1+s-2 \eta_k)}{\pi  (1+s)^2},
\end{align*}
and 
\begin{align*}
	W^{(s)}_{\tilde\rho_k}(\alpha_1)=\frac{4 \eta}{\pi(1+s)^2} \exp\!\bigg(\!\frac{s+1-4\eta_k}{2 \eta_k}\!\bigg).
\end{align*}
Therefore, using these expressions and Eq.~(\ref{eq:neg-rho_k}) in Eq.~(\ref{eq:M-bound-LON}), and then optimizing over $s$, we can find how the sample complexity scales with the number of modes.  

As an example, assuming that $\eta_k=1/2$ for all $k$, we can verify that $s=0$ is an optimal ordering parameter since $\mathcal{N}^{(0)}_{\rho_k}=1$ and $0\leq W^{(0)}_{\rho_k}(\alpha)\leq 2/(e \pi)$. Therefore, in this case, $\mathcal{R}(E)\leq 1$ and the kernel function can be estimated efficiently to within an additive error $\epsilon=1/\text{poly}(m)$ with an exponentially small probability of failure.  Numerical analysis can be utilized to handle other values of $\eta_k$. For instance, by Eq.~(\ref{eq:M-bound-LON}), it can be seen that for $\eta=0.85$ and the ordering parameter $s=0.3$,  $\mathcal{R}(E)\leq 2$, independent of the number of modes, and hence our estimation algorithm is still efficient. 

Note that due to their generality, phase-space methods may be outperformed by other classical simulation techniques. For instance, we derive a variant of Gurvits' algorithm for estimating the permanent~\cite{Gurvits} in Appendix~\ref{app:lossyGurvits} which allows us to perform efficient classical estimation of quantum kernels based on lossy photonic states for \textit{any} loss parameter.

\subsubsection{Example 2: Cat states}
Another class of states that we consider as the input states for LONs are the cat states
\begin{equation}
    \ket{\text{cat}_k}=\frac{\ket{\gamma_k}+\ket{-\gamma_k}}{\sqrt{2+2e^{-2|\gamma_k|^2}}},
\end{equation}
where $\ket{\gamma_k}$ denotes a coherent state. By using Eq.~\eqref{eq:Wigner}, the \sPQD\ of this state is given by
\begin{align}
           W^{(s)}_{\ket{\text{cat}_k}}(\alpha)&=\frac{1}{\pi(1-s)(1+e^{-2|\gamma_k|^2})} \Bigg( e^{-2\frac{|\alpha+\gamma_k|^2}{1-s}}\nonumber\\
           &+e^{-2\frac{|\alpha-\gamma_k|^2}{1-s}}\\
           &+2e^{-2|\gamma|^2} \Re\bigg(e^{-2\frac{(\alpha+\gamma_k)(\alpha^*-\gamma_k^*)}{1-s}}\bigg) \!\Bigg), \nonumber
\end{align}
 where $\Re$ denotes the real part of the expression. Using this equation and Eq.~\eqref{eq:PQD-under-loss}, we can then compute the \sPQD\ of the states after loss channels, $\tilde\rho_k=\Lambda_{\eta_k}(\ketbra{\text{cat}_k})$,
\begin{align}
            &W^{(s)}_{\tilde\rho_k}(\alpha)=\frac{1}{\pi(1-s)(1+e^{-2|\gamma_k|^2})} \Bigg(e^{-2\frac{|\alpha+\sqrt{\eta_k}\gamma_k|^2}{1-s}}\nonumber\\
           &\quad +e^{-2\frac{|\alpha-\sqrt{\eta_k}\gamma_k|^2}{1-s}}\\
           &\quad +2e^{-2|\gamma_k|^2} \Re\bigg(e^{-2\frac{\left(\alpha+\sqrt{\eta_k}\gamma_k\right)\!\left(\alpha^*-\sqrt{\eta_k}\gamma_k^*\right)}{1-s}}\bigg)\!\!\Bigg).\nonumber 
\end{align}
Given the parameters for input cat states and losses, one can compute the upper bound on the range of the estimator $\mathcal{R}(E)$ in Eq.~(\ref{eq:M-bound-LON}), and find an optimal ordering parameter $s$ by using this expression. 

For example, numerical analysis shows that if $\gamma_k=4$ and $\eta_k=0.8$ for all $k$, then $\pi  \mathcal{N}(W^{(s)}_{\tilde\rho_k}) \max_{\alpha} \vert W^{(-s)}_{\tilde\rho_k}(\alpha)\vert <1$ for $s=0.1$. Therefore, in this case, we have $\mathcal{R}(E)<2$, independent of the number of modes, and our Algorithm~\ref{algo:MCQD-kern1} can be used to estimate the kernel function efficiently.  


\subsection{Gaussian states} 
\label{sec:G}

We now focus on Gaussian states. Choosing the parameter $\bm s=\bm 0$ provides an efficient classical estimation of the quantum kernel through Algorithm~\ref{algo:MCQD-kern1}. However, instead of using the estimation algorithm we described, we can check whether the kernel function~\eqref{eq:kern-sPQD} can be computed directly using conventional methods for computing integrals. Indeed, for Gaussian states the corresponding quantum kernel functions can be computed exactly analytically. 

Gaussian states have Gaussian Wigner functions that can be described in terms of the mean values $\bar{\bm r}=\Tr[\rho\, \bm{r}]$, where $\bm{r}=(q_1, p_1,\dots, q_m, p_m)^T$ is the vector of canonical operators $q_j=(a_j+a_j^\dagger)/\sqrt{2}$ and $p_j=i(a_j^\dagger-a_j)/\sqrt{2}$, and the covariance matrix $\Sigma_{jk}=\Tr[\rho(\bm{r}_j\bm{r}_k+\bm{r}_k\bm{r}_j)]/2-\bar{\bm r}_j\bar{\bm r}_k$. Indeed, for such a Gaussian state $\rho$ we have~\cite{ferraro2005}:
\begin{equation}\label{eq:s-PQD-G}
W^{(\bm s)}_\rho\left(\bm{\alpha}\right)=\frac{ e^{-\frac 12 (\bm{\alpha}- \bar{\bm r}) (\Sigma-\bm s\oplus\bm s)^{-1} (\bm{\alpha}- \bar{\bm r})^\top}}{(2\pi)^m\sqrt{\det(\Sigma-\bm s\oplus\bm s)}},
\end{equation}
for all $\bm s$ such that $\Sigma-\bm s\oplus\bm s$ is positive definite, where we used that \SPQD s are related to the Wigner function by a Gaussian convolution. For single-mode Gaussian states, the minimal valid choice for $\tau:=\frac12(1-s)\in[0,\frac12]$ is known as the non-classical depth~\cite{lee1991measure}. This definition readily extends to the multimode setting (we use a simplified version of the general definition in~\cite{Sabapathy2016multimode}):

\begin{defi}[Non-classical depth]\label{def:nc-depth} The non-classical depth of a quantum state $\rho$ is the minimal value $\tau=\frac12(1-s)\in[0,1]$ such that the \SPQD\ of the state $\rho$ is non-negative for $\bm s=s\bm I$.
\end{defi}


\noindent By Eq.~(\ref{eq:s-PQD-G}), for multimode Gaussian states the minimal eigenvalue of the covariance matrix encodes this information. Using Eq.~\eqref{eq:kern-sPQD} for $\bm s=\bm 0$ one can verify that the kernel function is given by
\begin{align}
K(x,x')\!=\!\frac{e^{-\frac{1}{2}(\bar{\bm{r}}-\bar{\bm{r}}')(\Sigma(x)+\Sigma(x'))^{-1}(\bar{\bm{r}}-\bar{\bm{r}}')^{T}}}{2^{-m}\sqrt{\det\left(\Sigma(x)+\Sigma(x')\right)}}.\nonumber
\end{align}
Thus, for Gaussian states we do not really need our Monte Carlo-based method to estimate the kernel function, and quantum machine learning protocols that use data-encoding Gaussian states can be efficiently simulated by classical algorithms~\cite{SchuldPRA2020}. We note that it is strongly believed that sampling from the photon-counting probability distributions for Gaussian states cannot be simulated efficiently classically~\cite{Lund2015,RahimiComplexity2015,Hamilton2017}. Thus, although computing the kernel function for such states is easy, sampling from their photon-counting distribution is hard.


\subsection{Partially measured Gaussian states} 
\label{sec:part-meas-G}

Given the limitations of Gaussian states for quantum kernel methods, we can ask whether non-Gaussian states can be helpful. A standard way to engineer a non-Gaussian state is to perform non-Gaussian measurements on a subset of the modes of a Gaussian state (see~\cite{lvovsky2020production} and references therein). We thus consider the special case of Eq.~(\ref{eq:norm-state}) consisting of quantum states $\rho(x)$ prepared by measuring some of the output modes of a multimode Gaussian state, i.e.,
\begin{equation}\label{eq:norm-state-G}
x\mapsto\rho(x)=\frac{\mathrm{Tr}_k[(\Pi(x)\otimes\mathbb I_m)\rho_G(x)]}{\mathrm{Tr}[(\Pi(x)\otimes\mathbb I_m)\rho_G(x)]},
\end{equation}
where $\rho_G(x)=U(x)\rhoin(x)U(x)^\dag$ is an $(k+m)$-mode Gaussian state and $\Pi(x)=\bigotimes_{j=1}^k\Pi_j(x)$ is a tensor product of (possibly non-Gaussian) POVM elements. Recall that a quantum-efficient encoding refers to the fact that such states may be efficiently prepared using a quantum computer, a property which can be summarized here by $\mathrm{Tr}[(\Pi(x)\otimes\mathbb I_m)\rho_G(x)]\ge1/(\mathrm{poly}(m) )$.

Using our kernel estimation formalism and Algorithm~\ref{algo:MCQD-kern2} in particular, we show that, when kernel estimation is quantum-efficient,  classical kernel estimation is also efficient whenever either the number of measured modes or the non-classicality of the Gaussian states involved is too small:

\begin{theo}\label{th:part-meas-G}
    For any classical data $x$, let $\rho(x)$ be a quantum state encoding over $m$ modes obtained by performing a measurement of the first $k$ modes of a $(k+m)$-mode Gaussian state $\rho_G(x)$, as in Eq.~(\ref{eq:norm-state-G}). Let $\tau(x)$ denote the nonclassical depth of $\rho_G(x)$ (see Definition~\ref{def:nc-depth}) and let $\tau(x,x')=\max(\tau(x),\tau (x'))\in[0,\frac12]$. Then, assuming that the encoding is quantum-efficient, Algorithm~\ref{algo:MCQD-kern2} provides an estimate of the quantum kernel $K(x,x')=\Tr[\rho(x)\rho(x')]$ with additive precision $\epsilon$ and success probability $1-\delta$ in time
    \begin{equation}
         O\left(\frac{\log(\frac2\delta)}{\epsilon^2(1-\tau(x,x'))^{4k+2}} \mathrm{poly}(m)\right),
    \end{equation}
    %
    In particular, this provides an efficient classical algorithm for quantum kernel estimation whenever $k=O(\log m)$ or else $\tau(x,x')=O(\log m/k)$.
\end{theo}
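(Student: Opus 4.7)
The strategy is to invoke Algorithm~\ref{algo:MCQD-kern2}, whose three Monte Carlo subroutines estimate $\Tr[\rho_{\Pi(x)}]$, $\Tr[\rho_{\Pi(x')}]$ and $\Tr[\rho_{\Pi(x)}\rho_{\Pi(x')}]$ via Eqs.~\eqref{eq:kern-sPQD-norm1}--\eqref{eq:kern-sPQDgen} of Lemma~\ref{lem:kern-expr}. The key move is to pick the ordering parameters so that every \SPQD\ of a Gaussian state appearing in those formulas is non-negative: this forces the negative volume $\mathcal N$ to equal the trace of the underlying (sub-)operator (hence at most one), turns the sampling densities into genuine Gaussians that are efficiently sampleable, and pushes all of the phase-space negativity onto the $k$-mode POVM elements, where the range bound~\eqref{eq:gen-bound-range} applies.

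Concretely, for the two marginal estimates~\eqref{eq:kern-sPQD-norm1}--\eqref{eq:kern-sPQD-norm2} I would set $\bm u=\bm s=(1-2\tau(x))\bm I$ and $\bm v=\bm t=(1-2\tau(x'))\bm I$, which by Definition~\ref{def:nc-depth} together with Eq.~\eqref{eq:s-PQD-G} makes the Gaussian \SPQD s non-negative. For the joint estimate~\eqref{eq:kern-sPQDgen}, the unmeasured $m$-mode block $\bm s$ is shared between $\rho_G(x)$ and $\rho_G(x')$ with opposite signs; I would set $\bm s=\bm 0$ (valid for both since $\tau(x),\tau(x')\le\tfrac12$) and push the measured blocks to their extremes, $\bm u=(1-2\tau(x))\bm I$ and $\bm v=-(1-2\tau(x'))\bm I$. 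The product of the two non-negative Gaussian \SPQD s is itself a Gaussian density on $\mathbb C^{2k+m}$, which one samples from in $\mathrm{poly}(m)$ time by standard routines, and the corresponding $\mathcal N$ factors are all at most one (for the joint case, $\Tr[\sigma(x,x')]=\Tr[(\Tr_k\rho_G(x))(\Tr_k\rho_G(x'))]\le 1$ by Cauchy--Schwarz).

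With this choice, the estimator range is controlled purely by the POVM side. Applying~\eqref{eq:gen-bound-range} with minimal ordering parameter $1-2\tau(x)$ gives $\mathcal R(W^{(-\bm u)}_{\Pi(x)})\le(1-\tau(x))^{-(k+1)}$, symmetrically $(1-\tau(x'))^{-(k+1)}$ for the second marginal, and $(1-\tau(x,x'))^{-(2k+1)}$ for the joint POVM $\Pi(x)\otimes\Pi(x')$ seen as a $2k$-mode operator with effective minimal ordering parameter $1-2\tau(x,x')$. Substituting into~\eqref{eq:range-E} with $\mathcal N\le 1$ and invoking Hoeffding~\eqref{eq:N-samples} shows that each subroutine runs to additive precision $\epsilon'$ and failure probability $\delta'$ with $O((1-\tau(x,x'))^{-(4k+2)}\log(1/\delta')/\epsilon'^2)$ samples, each drawn and evaluated in $\mathrm{poly}(m)$ time.

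The last step is to assemble the three estimates into the kernel ratio using the propagation result of Appendix~\ref{app:ratio}: the quantum-efficiency hypothesis $\Tr[\rho_{\Pi(x)}],\Tr[\rho_{\Pi(x')}]\ge 1/\mathrm{poly}(m)$ converts per-term precision $\epsilon'=\epsilon/\mathrm{poly}(m)$ and failure $\delta'=\delta/3$ into an $(\epsilon,\delta)$-precise estimate of $K(x,x')$, absorbing an additional $\mathrm{poly}(m)$ factor into the sample count and reproducing the stated runtime. The two special cases are then immediate: for $k=O(\log m)$, $(1-\tau(x,x'))^{-(4k+2)}\le 2^{4k+2}=\mathrm{poly}(m)$ since $\tau\le\tfrac12$; for $\tau(x,x')=O(\log m/k)$, $(1-\tau)^{-(4k+2)}\le\exp(O(k\tau))=\mathrm{poly}(m)$ using $-\ln(1-\tau)\le 2\tau$ on $[0,\tfrac12]$. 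The main subtlety I would verify carefully is the consistency of the ordering on the shared $m$-mode block in the joint step, namely that a single $\bm s$ simultaneously satisfies $\bm s\le(1-2\tau(x))\bm I$ and $-\bm s\le(1-2\tau(x'))\bm I$; this holds because the intersection contains $\bm 0$ whenever $\tau(x)+\tau(x')\le 1$, which is automatic.
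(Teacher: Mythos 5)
Your proposal is correct and reproduces the paper's bound, but it routes around the paper's central technical ingredient in a genuinely different and arguably more elementary way. Both arguments share the same architecture: run the three subroutines of Algorithm~\ref{algo:MCQD-kern2}, choose ordering parameters so that every Gaussian \SPQD\ appearing in Lemma~\ref{lem:kern-expr} is non-negative (so the negative volumes are at most one and all negativity is pushed onto the $k$-mode POVM side), bound the estimator ranges via Eq.~\eqref{eq:gen-bound-range} with minimal ordering parameter $1-2\tau(x,x')$ to get the exponents $k+1$, $k+1$ and $2k+1$, and assemble the ratio using Appendix~\ref{app:ratio} and quantum-efficiency. Where you differ is in how non-negativity of the \emph{sampled} distributions is established. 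The paper first forms the $k$-mode reduced states $\sigma(x),\sigma(x')$ and the $2k$-mode partial-overlap state $\sigma(x,x')$, and then proves a new structural result (Lemma~\ref{lem:nc-depth}): the non-classical depth of Gaussian states is non-increasing under partial trace and under partial overlap, the latter requiring a fairly involved Schur-complement and block-inversion computation on covariance matrices. You instead fix the orderings at the level of the parent $(k+m)$-mode Gaussians (e.g.\ $\bm u=(1-2\tau(x))\bm I$, $\bm s=\bm 0$, $-\bm v=(1-2\tau(x'))\bm I$ for the joint term) and observe that the distributions one samples from are marginals, over the shared $m$ modes, of products of non-negative Gaussian \SPQD s, hence non-negative with total mass $\Tr[\sigma(x,x')]\le1$; for the specific uniform orderings the theorem needs, this delivers exactly what Lemma~\ref{lem:nc-depth} is invoked for, so nothing is missing. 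What you lose is the standalone monotonicity statement for the non-classical depth, which has independent interest; what you gain is a shorter proof in which the $\mathrm{poly}(m)$ sampling cost is transparent (one draws from a single Gaussian on $\mathbb C^{2k+m}$ and discards $\bm\gamma$). Two cosmetic points: the orderings should be taken arbitrarily close to, rather than equal to, $1-2\tau$, since Eq.~\eqref{eq:s-PQD-G} requires strict positive definiteness of $\Sigma-\bm s\oplus\bm s$ (the paper is careful about this); and your consistency check on the shared block, $-(1-2\tau(x'))\bm I\preceq\bm s\preceq(1-2\tau(x))\bm I$, is indeed the only place where the two parent states interact, and is automatically satisfiable by $\bm s=\bm 0$.
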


\noindent We give a proof of this theorem in Appendix~\ref{app:part-meas-G}, which combines a careful analysis of the time complexity of Algorithm~\ref{algo:MCQD-kern2} together with new properties of the non-classical depth of Gaussian states.

A nontrivial consequence of Theorem~\ref{th:part-meas-G} is that the efficiency of the classical simulation is independent of the non-Gaussianity of the measurements: even though these can inject a lot of negativity in the prepared state, as measured by the negative volume, this negativity does not affect the classical simulability through Algorithm~\ref{algo:MCQD-kern2}, because the POVM elements only contribute to defining the estimators in Algorithm~\ref{algo:MCQD-kern2} and their non-Gaussianity does not substantially change the range of these estimators.
In particular, even when making very non-Gaussian measurements (such as detecting many photons), classical simulation is always efficient as long as only a few modes $k=O(\log m)$ are measured. 

What about when the number of measured modes is larger? There, Theorem~\ref{th:part-meas-G} shows that classical simulation is still efficient when the Gaussian state being measured has small non-classicality, as quantified by the non-classical depth (see Definition~\ref{def:nc-depth}). This notion of non-classicality is directly related to the amount of thermal noise necessary to make the state fully classical~\cite{lee1991measure}, i.e., with non-negative $P$ function, and for a Gaussian state it is related to the minimum eigenvalue of its covariance matrix, see Eq.~(\ref{eq:s-PQD-G}). This quantity can in turn be bounded by a function of the squeezing parameters and the symplectic eigenvalues encoding the impurity of the corresponding Gaussian state. 

For illustration purposes, let us consider $\rho_G:=U\rhoin U^\dag$ with $\rhoin$ being a tensor product of identical single-mode thermal states $\nu$, and $U$ being a Gaussian unitary operator. By virtue of the Euler (or Bloch--Messiah) decomposition~\cite{ferraro2005} we may write $U=DVSV'$ with $D$ being a tensor product of single-mode displacement operators, $S$ being a tensor product of single-mode squeezing operators, and $V, V'$ being passive linear transforms describing the action of lossless LONs. Hence, $\rho_G=DVSV'\nu^{\otimes(k+m)}{V'}^\dag S^\dag V^\dag D^\dag=DVS\nu^{\otimes(k+m)}S^\dag V^\dag D^\dag$, where we used the fact that a tensor product of identical single-mode thermal states is invariant under lossless LONs.
This state has the following covariance matrix~\cite{ferraro2005}:
\begin{equation}
	O_V\begin{pmatrix}\frac1\mu\Delta^2&0\\0&\frac1\mu\Delta^{-2}\end{pmatrix}O_V^\top,
\end{equation}
where $O_V$ is the symplectic orthogonal matrix associated to the LON $V$, $\Delta$ is a diagonal matrix containing the squeezing parameters and $\mu$ is the purity of the single-mode thermal state $\nu$. Writing $\frac1r$ for the minimal squeezing parameter smaller than $1$, with $r\ge1$, the non-classical depth of the state is given by $\tau=\frac12(1-\frac1{r^2\mu})$ by Eq.~(\ref{eq:s-PQD-G}), and the condition $\tau=O(\log m/k)$ in Theorem~\ref{th:part-meas-G} implies that classical estimation of quantum kernels based on partially measured Gaussian states of the form of $\rho_G$ is efficient whenever the squeezing $r$ or the purity $\mu$ are too small.

Theorem~\ref{th:part-meas-G} also allows us to investigate the effect of lossy state preparation: with Eq.~(\ref{eq:PQD-under-loss}) and Definition~\ref{def:nc-depth}, uniform losses of transmissivity $\eta$ over all modes map the non-classical depth from $\tau$ to $\eta\,\tau$, in which case the classical estimation provided by Algorithm~\ref{algo:MCQD-kern2} is efficient whenever $k=O(\log m)$ or else $\eta\,\tau(x,x')=O(\log m/k)$.

Finally, up to taking mixtures, our results on partially measured Gaussian states also cover the case of quantum states prepared by Gaussian computations, together with \textit{adaptive} measurements, i.e., intermediate measurements whose outcome can drive the rest of the computation. In particular, we show that classical estimation of the corresponding quantum kernel functions is efficient under the conditions of Theorem~\ref{th:part-meas-G}, if the number of possible adaptive measurement outcomes is small enough (see Appendix~\ref{app:AGBS} for details).


\section{Conclusion and outlook}
\label{sec:concl}

We have introduced a framework based on phase-space quasi-probability distributions for the classical estimation of quantum kernel functions in machine learning. Our sufficient conditions for efficient classical simulation are based on negative volume, non-classical depth, and excess range of quasi-probability distributions, and identify phase-space negativity as an essential resource for achieving computational advantages in quantum machine learning with kernel methods. Our formalism can also be used to investigate the effect of errors and imperfections in quantum machine learning devices by examining their impact on the negativity of quasi-probability distributions.
	
By considering various examples based on variants of the boson sampling model, we have showcased how sampling from the output probability distribution of a quantum circuit can be classically hard, yet supervised machine learning using the same circuit can be classically efficient. 

Moreover, we have identified a subtle interplay between the quantum computational resources at hand: if no phase-space negativity is involved, and in the case of Gaussian measurements in particular, quantum kernels based on partially measured Gaussian states can be efficiently estimated classically using our Algorithm~\ref{algo:MCQD-kern1} (see Section~\ref{sec:G}). When phase-space negativity is present in the measurements, then our Algorithm~\ref{algo:MCQD-kern2} still allows for efficient classical estimation of the corresponding quantum kernels as long as the number of measured modes is small enough or the non-classical depth of the Gaussian states involved is small enough (see Section~\ref{sec:part-meas-G}). In other terms, quantum computational advantage for quantum kernel estimation is only possible in this setting by combining Gaussian non-classical resources (squeezing) and non-Gaussian resources (phase-space negative volume). A similar situation arises in the context of Gaussian boson sampling~\cite{hamilton2017gaussian}, where squeezing is a necessary ingredient together with non-Gaussianity for quantum computational advantage through sampling~\cite{chabaud2021holomorphic,chabaud2023resources}.
	
Our results could be extended in a few directions. 
The sample complexities of our classical simulation algorithms are naturally expressed using non-classical measures relating to phase-space negativity; it would be interesting to relate these measures to other existing ones such as quadrature coherence scale \cite{hertz2020quadrature}, which provides an estimation of
the distance to the set of classical states, or stellar rank \cite{chabaud2020stellar}, according to which a classification of bosonic kernels for quantum machine learning was recently derived \cite{henderson2024general}. 
Another direction could be to use our framework to analyze in more details how specific imperfections in implementations of quantum kernel methods, and the SWAP test in particular \cite{gao2018programmable,gan2020hybrid,vcernotik2024swap}, could ease classical simulability.
One could also apply the presented framework to the case of non-linear optical approaches such as the optical Ising machine~\cite{Ising} or Kerr-based kernels \cite{dehdashti2024enhancing}. 
Moreover, it would be interesting to generalize the presented approach to the case of discrete-variable quasi-probability distributions using frame theory~\cite{frame,Ferrie09}. 
	
\medskip

\noindent
{\it Acknowledgements.} U.C.\ acknowledges inspiring discussions with M.\ Walschaers, M.\ Frigerio, F.\ Arzani, J.\ Davis, M.\ Garnier, H.\ Thomas, P.E.\ Emeriau, S.\ Mehraban and D.\ Hangleiter. R.G.\ acknowledges discussion with C.\ Simon at the early stage of this project. 
U.C.\ acknowledges funding from the European Union's Horizon Europe Framework Programme (EIC Pathfinder Challenge project Veriqub) under Grant Agreement No.\ 101114899.

\bibliographystyle{linksen}
\bibliography{biblio}

\widetext

\appendix


\newpage

\begin{center}
    {\huge Appendix}
\end{center}

\section{Computation of phase-space quasi-probability distributions}
\label{app:comp}

We present a general method in order to describe the \SPQD s when the states of interest $\rho(x)$ are obtained by applying a unitary circuit $U(x)$ to an initial state $\rhoin(x)$. We further assume that the initial state $\rhoin(x)= \rho_1(x)\otimes \cdots\otimes\rho_m(x)$ is a product state, in which case its \SPQD\ can be efficiently described. 
Then, the \SPQD\ of the state $\rho(x)=U(x)\rhoin(x)U(x)^\dag$ is given by
\begin{align}\label{eq:pqd-trans-func}
W^{(\bm s)}_{\rho(x)}(\bm{\alpha})=\!\int_{\mathbb C^m}\! \bdm{\beta} W^{(\bm t)}_{\rhoin(x)}(\bm{\beta})\; T^{(\bm s,-\bm t)}_{U(x)}\!(\bm\alpha|\bm\beta),
\end{align}
where $T^{(\bm s, -\bm t)}_{U(x)}\!(\bm\alpha|\bm\beta)=\pi^m \Tr[U(x)\Delta^{(-\bm t)}(\bm\alpha) U^\dagger(x) \Delta^{(\bm s)}(\bm\alpha)]$ is the transition function~\cite{rahimi2016sufficient} associated to $U(x)$, and $W^{(\bm t)}_{\rhoin(x)}(\bm{\beta})$ is the \PQD\ of the initial product state given by $W^{(\bm t)}_{\rhoin(x)}(\bm{\beta})=\prod_{j=1}^{m}W^{( t_j)}_{\rho_j(x)}\left(\beta_j\right)$. 

We note that in certain cases such as Gaussian circuits, the transition function $T^{(\bm s, -\bm t)}_{U(x)}\!(\bm\alpha|\bm\beta)$ can be efficiently computed. More generally, one can decompose the encoding circuit $U(x)=u_L(x)\cdots u_{1}(x)$ into $L$ layers of unitaries acting on at most a constant number of modes, in which case we have
\begin{equation}
\begin{aligned}\label{eq:trans-sub-trans}
T^{(\bm{r}_L, -\bm{r}_0)}_{U(x)}\!(\bm\gamma_L|\bm\gamma_0)=\!\int_{(\mathbb C^m)^{L-1}}\!\text{d}^{2m}\!\bm\gamma_1\dots\text{d}^{2m}\!\bm\gamma_{L-1}\prod_{l=1}^{L}T^{(\bm r_{l}, -\bm r_{l-1})}_{u_{l}(x)}\!(\bm\gamma_{l}|\bm\gamma_{l-1}).
\end{aligned}
\end{equation} 
Therefore, using Eqs.~\eqref{eq:pqd-trans-func} and~\eqref{eq:trans-sub-trans}, the kernel function can be expressed in terms of functions that can be computed efficiently and used to draw samples from the distribution~\eqref{eq:prob-dis}, as we show below. 

Writing the kernel function as $K(x,x')=\Tr\left[U^\dagger(x')U(x)\rho U^\dagger(x)U(x')\rho\right]$ and using  $T^{(\bm s, -\bm t)}_{U^\dagger(x)}\!(\bm\alpha|\bm\beta)=T^{(\bm s, -\bm t)}_{U(x)}\!(\bm\beta|\bm\alpha)$, one can define the probability distribution
\begin{align}\label{eq:prob-total}
P(\vec{\bm\gamma})=\frac{1}{\mathcal{N}} \big\vert W^{(\bm{r}_0)}_{\rho(x)}(\bm{\gamma}_0)\big\vert \prod_{l=1}^{L} \bigg\vert T^{( \bm{r}_{l}, -\bm{r}_{l-1})}_{u_{l}(x)}\!(\bm\gamma_{l}|\bm\gamma_{l-1})\bigg\vert\prod_{k=L+1}^{2L}\! \bigg\vert
T^{( \bm{r}_{k}, -\bm{r}_{k-1})}_{u_{k-L}(x')}\!(\bm\gamma_{k}|\bm\gamma_{k-1})\bigg\vert,
\end{align}
where $\mathcal{N}=\mathcal{N}\big(W^{(\bm{r}_0)}_{\rho(x)}\big) \prod_{l=1}^{L}\mathcal{N}\big(T^{( \bm{r}_{l}, -\bm{r}_{l-1})}_{u_{l}(x)}\big)\prod_{k=L+1}^{2L}\mathcal{N}\big(T^{( \bm{r}_{k}, -\bm{r}_{k-1})}_{u_{k-L}(x')}\big)$ is the total negative volume, and $\vec{\bm\gamma}=(\bm\gamma_0,\dots,\bm\gamma_{2L})$ is the vector of $(2L+1)m$ complex numbers. Viewing this expression as a Markov chain, by sampling form the distribution $\big\vert W^{(\bm{r}_0)}_{\rho(x)}(\bm{\gamma}_0)\big\vert/\mathcal{N}\big(W^{(\bm{r}_0)}_{\rho(x)}\big)$ associated to the initial state that is known to be product, as well as other conditional probability distributions associated to the transition functions, one can generate $N$ random samples $\vec{\bm\gamma}_1, \dots, \vec{\bm\gamma}_{N}$. Then, using the estimator
\begin{align}
	\begin{split}
E(\vec{\bm\gamma}):=\mathcal{N}_{\rho}^{(\bm{r}_0)}\text{sgn}[W^{(\bm{r}_0)}_{\rho(x')}\left(\bm{\gamma}_0\right)]\pi^m W^{(\bm{r}_{2L})}_{\rho(x')}\left(\bm{\gamma}_{2L}\right)&\prod_{l=1}^{L} \mathcal{N}_{u_l(x)}^{(\bm{r}_l)}\text{sgn}\big[T^{( \bm{r}_{l}, -\bm{r}_{l-1})}_{u_{l}(x)}\!(\bm\gamma_{l}|\bm\gamma_{l-1})\big]\\
\times&\prod_{k=L+1}^{2L}   \mathcal{N}_{u_{k-L}(x')}^{(\bm{r}_{k})} \text{sgn}\big[
T^{(\bm{r}_{k}, -\bm{r}_{k-1})}_{u_{k-L}(x')}\!(\bm\gamma_{k}|\bm\gamma_{k-1})\big],
\end{split}
\end{align}
the sample mean $\frac{1}{N}\sum_j E(\vec{\bm\gamma}_j)$ can be computed. As discussed in the main text, the relation between the estimation error and the probability of failure is given by Hoeffding's inequality. Therefore, following a similar argument, the kernel function can be estimated to within error $\epsilon=1/\text{poly}(m)$ and an exponentially small probability of failure, if one can find ordering parameters $\{\bm{r}_k\}$ such that
$\pi^m\mathcal{N}\times\big[\!\max_{\bm\gamma_{\text{b}}}\!W^{(\bm{r}_{2L})}_{\rho(x')}(\bm{\gamma}_{\text{b}})-\min_{\bm\gamma_{\text{a}}}\!W^{(\bm{r}_{2L})}_{\rho(x')}(\bm{\gamma}_{\text{a}})\!\big]$ scales polynomially with the number of modes.

Notice that, in general, this estimation algorithm is not as optimal as the estimation algorithm in terms of the \SPQD s of the data-encoding states,  see Eq.~\eqref{eq:kern-sPQD}. Indeed, the negative volume at the output of a quantum circuit is a lower bound of the product of the negative volumes of circuit elements because non-classical processes may reduce the effects of each other, e.g., a squeezing process and an anti-squeezing process can cancel each other.


\section{Decomposition of kernel estimation}
\label{app:ratio}

In this section, we show that for quantum-efficient encodings, i.e., $\Tr[\rho_{\Pi(x)}]\ge1/\mathrm{poly}(m)$ and $\Tr[\rho_{\Pi(x')}]\ge1/\mathrm{poly}(m)$, estimating the ratio $K(x,x')=\frac{\Tr[\rho_{\Pi(x)}\rho_{\Pi(x')}]}{\Tr[\rho_{\Pi(x)}]\Tr[\rho_{\Pi(x')}]}$ up to inverse-polynomial precision with exponentially small probability of failure can be done by estimating $\Tr[\rho_{\Pi(x)}]$, $\Tr[\rho_{\Pi(x')}]$ and $\Tr[\rho_{\Pi(x)}\rho_{\Pi(x')}]$ independently up to (smaller) inverse-polynomial precision with (smaller) exponentially small probability of failure, and computing the ratio of those estimates.

We rely on the following technical result:

\begin{lem}
Let $0<\epsilon<\epsilon'<1$, let $\delta>0$, let $a,b,c\in[0,1]$ and let $\tilde a,\tilde b,\tilde c\in\mathbb R$ be random variables such that
\begin{align}
|a-\tilde a|&\le\epsilon,\\
|b-\tilde b|&\le\epsilon,\\
|c-\tilde c|&\le\epsilon,
\end{align}
each with probability greater than or equal to $1-\delta$. Finally, assume that $b>\epsilon'$ and $c>\epsilon'$. Then $b,c,\tilde b,\tilde c$ do not vanish and
\begin{equation}
\left|\frac a{bc}-\frac{\tilde a}{\tilde b\tilde c}\right|\le\frac{(3+\epsilon)\epsilon}{{\epsilon'}^2(\epsilon'-\epsilon)^2},
\end{equation}
all with probability at least $1-3\delta$. 
\end{lem}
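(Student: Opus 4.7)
The plan is to proceed in two stages: first a probabilistic reduction via the union bound, then a deterministic bound on the error of the ratio conditional on all three estimates being good.

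First, I would invoke the union bound on the three events $\{|a-\tilde a|\le\epsilon\}$, $\{|b-\tilde b|\le\epsilon\}$, $\{|c-\tilde c|\le\epsilon\}$, each of which holds with probability at least $1-\delta$. Their intersection therefore holds with probability at least $1-3\delta$, and it suffices to establish the claimed inequality deterministically on this event. On this event, since $b>\epsilon'>\epsilon$, we have $\tilde b\ge b-\epsilon>\epsilon'-\epsilon>0$, and likewise $\tilde c>\epsilon'-\epsilon>0$, so all denominators are nonzero.

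Next I would put the difference over a common denominator,
\begin{equation}
\frac{a}{bc}-\frac{\tilde a}{\tilde b\tilde c}=\frac{a\,\tilde b\tilde c-\tilde a\,bc}{bc\,\tilde b\tilde c},
\end{equation}
and bound the numerator by the standard add-and-subtract trick: write $a\tilde b\tilde c-\tilde abc=a(\tilde b\tilde c-bc)+bc(a-\tilde a)$, and further split $\tilde b\tilde c-bc=\tilde c(\tilde b-b)+b(\tilde c-c)$. Using $a,b,c\in[0,1]$ and $\tilde c\le c+\epsilon\le 1+\epsilon$, the triangle inequality yields $|\tilde b\tilde c-bc|\le (1+\epsilon)\epsilon+\epsilon=(2+\epsilon)\epsilon$, and hence $|a\tilde b\tilde c-\tilde abc|\le(2+\epsilon)\epsilon+\epsilon=(3+\epsilon)\epsilon$. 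For the denominator, I would use $bc\ge{\epsilon'}^2$ together with $\tilde b\tilde c\ge(\epsilon'-\epsilon)^2$ to obtain $bc\,\tilde b\tilde c\ge{\epsilon'}^2(\epsilon'-\epsilon)^2$. Combining the two bounds gives exactly the claimed inequality.

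There is no real obstacle here: the statement is a routine error-propagation lemma, and the only thing to be careful about is tracking constants cleanly in the telescoping decomposition of $a\tilde b\tilde c-\tilde abc$ and in the lower bound on the denominator. Using the slightly suboptimal but uniform bounds $a,b,c\le1$ and $\tilde c\le 1+\epsilon$ avoids case analysis and produces the stated factor $(3+\epsilon)\epsilon$. The resulting deterministic inequality, together with the initial union-bound step, gives the lemma with the claimed failure probability $3\delta$.
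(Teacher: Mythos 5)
Your proposal is correct and follows essentially the same route as the paper: union bound, lower-bounding $\tilde b,\tilde c\ge\epsilon'-\epsilon$ so that $bc\,\tilde b\tilde c\ge{\epsilon'}^2(\epsilon'-\epsilon)^2$, and a telescoping decomposition of $a\tilde b\tilde c-\tilde abc$ bounded by $(3+\epsilon)\epsilon$ using $a,b,c\in[0,1]$ and $\tilde c\le1+\epsilon$. The paper telescopes in three terms directly rather than first isolating $\tilde b\tilde c-bc$, but this is only a cosmetic difference and yields the identical constants.
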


\begin{proof}
With the union bound, we have
\begin{align}
|a-\tilde a|&\le\epsilon,\\
|b-\tilde b|&\le\epsilon,\\
|c-\tilde c|&\le\epsilon,
\end{align}
all together with probability at least $1-3\delta$. In that case, $\tilde b\ge\epsilon'-\epsilon$ and $\tilde c\ge\epsilon'-\epsilon$, so $bc\tilde b\tilde c\ge{\epsilon'}^2(\epsilon'-\epsilon)^2$. This gives
\begin{align}
\left|\frac a{bc}-\frac{\tilde a}{\tilde b\tilde c}\right|&=\frac{|a\tilde b\tilde c-\tilde abc|}{bc\tilde b\tilde c}\\
&\le\frac{|a-\tilde a|bc+a|\tilde b-b|c+a\tilde b|\tilde c-c|}{\epsilon^{'2}(\epsilon'-\epsilon)^2}\\
&\le\frac{(3+\epsilon)\epsilon}{{\epsilon'}^2(\epsilon'-\epsilon)^2},
\end{align}
where we used the triangle inequality in the second line, and $a,b,c\in[0,1]$ and $\tilde b\le1+\epsilon$ in the last line.
\end{proof}

\noindent In particular, with $\epsilon=(\epsilon'/2)^4\epsilon''$ and $\delta=\delta'/3$ this implies $\left|\frac a{bc}-\frac{\tilde a}{\tilde b\tilde c}\right|\le\epsilon''$ with probability greater than or equal to $1-\delta'$. Using this lemma for $a=\Tr[\rho_{\Pi(x)}\rho_{\Pi(x')}]$, $b=\Tr[\rho_{\Pi(x)}]$ and $c=\Tr[\rho_{\Pi(x')}]$, with $\epsilon',\epsilon''=O(1/\mathrm{poly}(m))$ and $\delta=O(1/\mathrm{exp}(m))$ completes the proof.


\section{Phase-space quasi-probability distributions of states after loss channels}
\label{app:loss}

Let us consider a single-mode loss channel $\Lambda_{\eta}$ with transmissivity $0\leq\eta\leq1$. This channel reduces the amplitude of a coherent state
\begin{equation}
	\label{eq:lambda-eta}
	\Lambda_{\eta}(\ketbra{\alpha})=\ketbra{\sqrt{\eta} \alpha}.
\end{equation}
We can expand single-mode displacement operators in terms of coherent states
\begin{equation}
D(\xi)=e^{|\xi|^2/2} e^{-\xi^* a} e^{\xi a^{\dagger}}=e^{|\xi|^2/2} e^{-\xi^* a}\frac{1}{\pi}\int d^2\alpha \ketbra{\alpha} e^{\xi a^{\dagger}}=\frac{1}{\pi}e^{|\xi|^2/2}\int d^2\alpha\; e^{\xi\alpha^*-\alpha\xi^*}\ketbra{\alpha},
\end{equation}
where we used the Baker--Campbell--Hausdorff formula, the resolution of identity in terms of coherent states and  $a\ket{\alpha}=\alpha \ket{\alpha}$. By using this relation, Eq.~\eqref{eq:lambda-eta} as well as the linearity of quantum channels, we can then find the action of loss channels on single-mode displacement operators
\begin{align}
	\label{loss-dis}
	\Lambda_{\eta}(D(\xi))&=\frac{1}{\pi}e^{|\xi|^2/2}\int d^2\alpha\; e^{\xi\alpha^*-\alpha\xi^*} \Lambda_{\eta}(\ketbra{\alpha})\nonumber\\
	&=\frac{1}{\pi}e^{|\xi|^2/2}\int d^2\alpha\; e^{\xi\alpha^*-\alpha\xi^*} \ketbra{\sqrt{\eta} \alpha}\nonumber\\
	&=\frac{1}{\pi}e^{|\xi|^2/2}\int \frac{d^2\beta}{\eta}\; e^{\xi\beta^*/\sqrt{\eta}-\xi^*\beta/\sqrt{\eta}} \ketbra{\beta}\nonumber\\
	&=\frac{1}{\eta}e^{|\xi|^2/2} e^{-\xi^* a/\sqrt{\eta}} \frac{1}{\pi}\int d^2\beta \ketbra{\beta} e^{\xi a^{\dagger}/\sqrt{\eta}}\nonumber\\
	&=\frac{1}{\eta} \exp\!\left[-\Big(\frac{1}{\eta}-1\Big)\frac{|\xi|^2}{2}\right] D({\xi}/{\sqrt{\eta}}).
\end{align}
The adjoint map $\Lambda_{\eta}^*$ is related to $\Lambda_{\eta}$ through this relation
\begin{equation}
	\text{Tr}\left[ \Lambda_{\eta}(D(\xi)) D(\zeta)\right]= \text{Tr}\left[ D(\xi) \Lambda_{\eta}^*(D(\zeta))\right]\!.
\end{equation} 
Thus, using $\Tr[D(\xi)D(\eta)]=\pi\delta^{2}(\xi+\eta)$, we have
\begin{align}
	\text{Tr}\left[ \Lambda_{\eta}(D(\xi)) D(\zeta)\right]&=\frac{1}{\eta} \exp\!\left[-\Big(\frac{1}{\eta}-1\Big)\frac{|\xi|^2}{2}\right] \text{Tr}\left[ D(\xi/\sqrt{\eta}) D(\zeta)\right]\nonumber \\
	&=\frac{1}{\eta} \exp\!\left[-\Big(\frac{1}{\eta}-1\Big)\frac{|\xi|^2}{2}\right] \pi \delta^2(\xi/\sqrt{\eta}+\zeta)\nonumber\\
	&=\exp\!\left[-\Big(\frac{1}{\eta}-1\Big)\frac{|\xi|^2}{2}\right] \pi \delta^2(\xi+\zeta\sqrt{\eta})\nonumber\\
	&=\exp\!\left[-(1-\eta)\frac{|\zeta|^2}{2}\right] \pi \delta^2(\xi+\zeta\sqrt{\eta})\nonumber \\
	&=\exp\!\left[-(1-\eta)\frac{|\zeta|^2}{2}\right] \text{Tr} \left[ D(\xi) D(\zeta\sqrt{\eta}) \right]\nonumber \\
	&=\text{Tr}\left[ D(\xi) \Lambda_{\eta}^*(D(\zeta))\right].
\end{align}
This implies that the action of the adjoint map on displacement operators is given by
\begin{equation}
	\label{ed:ad-loss-dis}
	\Lambda_{\eta}^*(D(\zeta))=  \exp\!\left[-(1-\eta)\frac{|\zeta|^2}{2}\right] D(\zeta\sqrt{\eta}). 
\end{equation}
This in turn gives us the action of the adjoint map on the frame operators \eqref{eq:Delta-frame} defining the single-mode \sPQD s, 
\begin{align}\label{eq:dual-loss-on-delta}
	\Lambda_{\eta}^*(\Delta^{(s)}(\alpha))&=\int_{\mathbb C}\frac{\dm{\xi}}{\pi^{2}}\, e^{-(1-\eta)|\xi|^2/2}D(\sqrt{\eta}\xi)\, e^{s|\xi|^\dagger/2}\,e^{\alpha\xi^{*}-\xi\alpha^{*}}
	=\frac{1}{\eta} \Delta^{(s/\eta-(1-\eta)/\eta)}\big(\alpha/\sqrt{\eta}\big).
\end{align}
Employing this relation, we can express the \sPQD\ of the quantum state after loss $\rho=\Lambda_{\eta}(\rhoin)$ as
\begin{align}\label{eq:s-PQD-loss}
	W^{(\bm s)}_\rho({\alpha})=\Tr[\Lambda_{\eta}(\rhoin) \Delta^{(\bm s)}(\alpha)]
	=\Tr[\rhoin \Lambda_{\eta}^*\big(\Delta^{(\bm s)}(\alpha)\big)]
	=\frac{1}{\eta}W^{(s/\eta-(1-\eta)/\eta)}_{\rhoin}\big(\alpha/\sqrt{\eta}\big),
\end{align}
where $W^{(s)}_{\rhoin}(\alpha)$ is the \sPQD\ of the initial state $\rhoin$.

Considering the tensor product of single-mode loss channels $
\Lambda_{\bm\eta}=\Lambda_{\eta_1}\otimes\dots\otimes\Lambda_{\eta_m}$, Eq.~\eqref{eq:dual-loss-on-delta} can be generalized for multimode $\Delta^{(\bm s)}(\bm\alpha)$, defined by Eq.~\eqref{eq:Delta-frame},
\begin{align}
	\Lambda_{\bm\eta}^*\big(\Delta^{(\bm s)}(\bm\alpha)\big)&=\int_{\mathbb C^m}\frac{\bdm{\xi}}{\pi^{2m}}\, e^{-\bm\xi(I-\bm\eta)\bm\xi^\dagger/2}D(\bm\xi\sqrt{\bm\eta})\, e^{\bm\xi\bm s\xi^\dagger/2}\,e^{\bm\alpha\bm\xi^{\dagger}-\bm\xi\bm\alpha^{\dagger}}
	=\frac{1}{\det \bm\eta} \Delta^{(\bm\eta^{-1/2}(\bm{s}-I+\bm\eta)\bm\eta^{-1/2})}\big(\bm\alpha\bm\eta^{-1/2}\big),
\end{align}
where $\bm\eta=\text{diag}(\eta_1,\dots,\eta_m)$. By using this relation, the \SPQD\ of the state after an $m$-mode loss channel $\rho=\Lambda_{\bm\eta}(\rhoin)$ can be expressed in terms of the $(\bm t)$-\PQD\ of the $m$-mode initial state $\rhoin$
\begin{equation}
	W^{(\bm s)}_\rho(\bm{\alpha})=\frac{1}{\det \bm\eta} W^{(\bm t)}_{\rhoin} \big(\bm{\alpha}\bm\eta^{-1/2}\big),
\end{equation}
where $\bm t=\bm\eta^{-1/2}(\bm{s}-I+\bm\eta)\bm\eta^{-1/2}$, or equivalently $\bm s=\bm\eta^{1/2}\bm{t}\bm\eta^{1/2}+I-\bm\eta$. These reduce to the expression given in the main text in the case of diagonal matrices of ordering parameters.


\section{Estimation of lossy photonic quantum kernels using Gurvits algorithm}
\label{app:lossyGurvits}

In this section, we derive an alternative approach to classical estimation of quantum kernels based on lossy single-photon states fed into LONs based on Gurvits's algorithm for estimating the permanent~\cite{Gurvits}. 

In this case, the kernel function takes the form
\begin{align}
K(x,x')&=\Tr\Big[U(x)\bigotimes_{j=1}^m((1-\eta_j)\ketbra{0}+\eta_j\ketbra{1})U(x)^\dag U(x')\bigotimes_{j=1}^m((1-\eta_j)\ketbra{0}+\eta_j\ketbra{1})U(x')^\dag\Big]\\
&=\sum_{\bm p,\bm q\in\{0,1\}^m}\prod_{j=1}^mf_{\eta_j}(p_j)f_{\eta_j}(q_j)|\langle p_1\dots p_m|V(x,x')|q_1\dots q_m\rangle|^2,
\end{align}
where we have defined $V(x,x'):=U(x)^\dag U(x')$ and $f_\eta(p):=\eta^{1-p}(1-\eta)^p$. This means that the kernel function is equal to the expectation value of $|\langle\bm p|V(x,x')|\bm q\rangle|^2$ for $p_1,\dots,p_m$ and  $q_1,\dots,q_m$ both drawn from the product of univariate Bernoulli distributions over $\{0,1\}$ with probability $\eta_j$. Combined with Gurvits's algorithm for estimating the permanent~\cite{Gurvits}, this readily gives a classical estimation algorithm for the kernel function:
\begin{itemize}
    \item For $j\in\{1,\dots,m\}$, sample a bit  $p_j$ from the univariate Bernoulli distributions over $\{0,1\}$ with probability $\eta_j$.
    \item For $j\in\{1,\dots,m\}$, sample a bit  $q_j$ from the univariate Bernoulli distributions over $\{0,1\}$ with probability $\eta_j$.
    \item If $\|\bm p\|_1\neq\|\bm q\|_1$ output $0$ and halt. Otherwise, let $n=\|\bm p\|_1=\|\bm q\|_1$.
    \item Let $V_n(x,x')$ be the $n\times n$ matrix obtained from $V(x,x')$ by deleting the $j^{th}$ row (resp.\ $j^{th}$ column) if $p_j=0$ (resp.\ $q_j=0$) for all $j\in\{1,\dots,m\}$. 
    \item Let $W(x,x'):=V_n(x,x')\oplus V_n(x,x')^*$, such that $\mathrm{Per}[W(x,x')]=|\mathrm{Per}[V_n(x,x')]|^2$. We write $W(x,x')=(w_{ij}(x,x'))_{1\le i,j\le2n}$. 
\item Sample uniformly $N$ bit-strings $(y_1^{(l)},\dots,y_{2n}^{(l)})\in\{-1,1\}^{2n}$ for $l\in\{1,\dots,N\}$.
    \item Output $\frac1N\sum_{l=1}^Ny_1^{(l)}\cdots y_{2n}^{(l)}\prod_{i=1}^{2n}\sum_{j=1}^{2n}y_j^{(l)}w_{ij}(x,x')$.
\end{itemize}

\noindent By Hoeffding's inequality~\cite{Hoeffding} and given that the above estimator is bounded by $\|W(x,x')\|^{2n}\le1$ since $W(x,x')$ is the submatrix of a unitary matrix, the estimate obtained is an $\epsilon$-close additive estimate of the kernel function with probability at least $1-\delta$ whenever $N\ge\frac1{2\epsilon^{2}}\ln\!\left(\frac2\delta\right)$.

In case all $\eta_j$'s are equal to some $\eta\in (0,1)$, the first three steps of the above procedure can be replaced by the following ones:

\begin{itemize}
\item Compute $\theta= \sum_{n=0}^m  \binom mn^2 \eta^{2n} (1-\eta)^{2(m-n)} \in [0,1]$ and sample $b\in \{0,1\}$ from a Bernoulli distribution with parameter $\theta$, i.e., the probability of $b=1$ equals $\theta$.  

\item If $b=0$ output $0$ and halt. Otherwise, if $b=1$, sample $n\in \{0, 1, \dots, m\}$ from the binomial distribution with parameters $(m, \eta)$, i.e., the probability of picking $n$ equals $\binom mn\eta^{n} (1-\eta)^{m-n}$. 

\item Sample vectors $\bm p, \bm q \in \{0,1\}^m$ independently and uniformly at random under the constraint $\|\bm p\|_1 = \|\bm q\|_1=n$~\footnote{Here is a method for sampling a uniform $\bm p$ satisfying $\|\bm p\|_1=n$. First, let $p_1=1$ with probability $n/m$. If $p_1=1$, then let $p_2=1$ with probability $(n-1)/m-1$; if $p_1=0$, then let $p_2=1$ with probability $n/(m-1)$. Continue recursively with the rest of coordinates.}.

\end{itemize}


\section{Classical estimation of quantum kernel functions for partially measured Gaussian states}
\label{app:part-meas-G}

In this section, we give a proof of Theorem~\ref{th:part-meas-G}, which we recall below:

\setcounter{theo}{0}
\begin{theo}
    For any classical data $x$, let $\rho(x)$ be a quantum state encoding over $m$ modes obtained by performing a possibly non-Gaussian measurement of the first $k$ modes of a $(k+m)$-mode Gaussian state $\rho_G(x)$, as in Eq.~(\ref{eq:norm-state-G}). Let $\tau(x)$ denote the non-classical depth of $\rho_G(x)$ (see Eq.~(\ref{eq:s-PQD-G}) and~\cite{lee1991measure}) and let $\tau(x,x')=\max(\tau(x),\tau (x'))\in[0,\frac12]$. Then, assuming that the encoding is quantum-efficient, Algorithm~\ref{algo:MCQD-kern2} provides an estimate of the quantum kernel $K(x,x')=\Tr[\rho(x)\rho(x')]$ with additive precision $\epsilon$ and success probability $1-\delta$ in time
    \begin{equation}
        O\left(\frac{\log(\frac2\delta)\mathrm{poly}(m)}{\epsilon^2(1-\tau(x,x'))^{4k+2}}\right).
    \end{equation}
    In particular, this provides an efficient classical algorithm for quantum kernel estimation whenever $k=O(\log m)$ or $\tau(x,x')=O(\log m/k)$.
\end{theo}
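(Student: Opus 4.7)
The plan is to invoke Algorithm~\ref{algo:MCQD-kern2} with ordering parameters calibrated to the non-classical depths $\tau(x)$ and $\tau(x')$, and to combine the three sub-estimates via the ratio lemma of Appendix~\ref{app:ratio}. By the quantum-efficient assumption, both $\Tr[\rho_{\Pi(x)}]$ and $\Tr[\rho_{\Pi(x')}]$ are bounded below by $1/\mathrm{poly}(m)$, so it suffices to obtain inverse-polynomially precise estimates of $\Tr[\rho_{\Pi(x)}]$, $\Tr[\rho_{\Pi(x')}]$ and $\Tr[\rho_{\Pi(x)}\rho_{\Pi(x')}]$ with exponentially small failure probability. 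The joint numerator estimation will yield the dominant sample complexity, so I will focus the analysis on it; the marginal denominator estimations are handled analogously with strictly smaller bounds.

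The central technical step is to choose, for the joint estimation in Eq.~(\ref{eq:kern-sPQDgen}), ordering parameters $u_j = 1 - 2\tau(x)$, $v_j = -(1-2\tau(x'))$ and $s_j = 1 - 2\tau(x,x')$ for all coordinates. By Definition~\ref{def:nc-depth} this guarantees that both $W^{(\bm u\oplus\bm s)}_{\rho_G(x)}$ and $W^{(-\bm v\oplus-\bm s)}_{\rho_G(x')}$ are non-negative multivariate Gaussian densities, which can be sampled from efficiently via Cholesky decomposition of their covariance matrices. The corresponding negative volume $\mathcal{N}$ then equals the trace of the $2k$-mode operator $\sigma(x,x')$ appearing in Algorithm~\ref{algo:MCQD-kern2}, which factorises through the partial traces of $\rho_G(x)$ and $\rho_G(x')$ and is at most $1$ by Cauchy--Schwarz. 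Simultaneously, since $\bm u\ge 0$ and $-\bm v\ge 0$ by construction and $\|\Pi(x)\|, \|\Pi(x')\| \le 1$, Eq.~(\ref{eq:gen-bound-range}) applied to $\Pi(x)\otimes\Pi(x')$ as a $2k$-mode operator with $t_{\min} = 1 - 2\tau(x,x')$ yields $\mathcal{R}(E) \le 2(1-\tau(x,x'))^{-(2k+1)}$. Hoeffding's inequality via Eq.~(\ref{eq:N-samples}) then delivers the claimed sample complexity $O(\log(2/\delta)\,\epsilon^{-2}(1-\tau(x,x'))^{-(4k+2)})$, while the extra $\mathrm{poly}(m)$ factor comes from the per-sample Gaussian sampling and PQD evaluation costs, together with the precision loss from the ratio combination in Appendix~\ref{app:ratio}.

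The hard part will be arranging the \emph{shared} ordering parameter $\bm s$ to satisfy two competing non-negativity constraints, $s_j \le 1 - 2\tau(x)$ for $\rho_G(x)$ and $-s_j \le 1 - 2\tau(x')$ for $\rho_G(x')$, while simultaneously ensuring $\bm u \ge 0$ and $-\bm v \ge 0$ so that the POVM ranges are uniformly controlled. The key observation that unlocks this is that the non-classical depths of two Gaussian states always satisfy $\tau(x) + \tau(x') \le 1$, since each $\tau \le 1/2$ (as is clear from Eq.~\ref{eq:s-PQD-G}), which makes the admissible interval $[-(1-2\tau(x')),\,1-2\tau(x)]$ for $s_j$ non-empty. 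A notable consequence of this construction is that the non-Gaussianity of the POVM elements enters the estimator only through the uniform bound $\|\Pi\| \le 1$ governing the range of $W^{(-\bm u)}_{\Pi(x)}$ and $W^{(\bm v)}_{\Pi(x')}$, so that the final complexity depends only on the non-classical depth of the underlying Gaussian states and the number $k$ of measured modes, and not on the (possibly large) phase-space negativity injected by highly non-Gaussian measurements.
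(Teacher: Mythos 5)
Your proposal is correct and arrives at the stated complexity, but it routes around the paper's central technical ingredient in an interesting way. The paper runs Algorithm~\ref{algo:MCQD-kern2} literally: it samples $(\bm\alpha,\bm\beta)$ from the $(2k)$-mode \PQD\ of the partial-overlap state $\sigma(x,x')$, and therefore must prove that this \PQD\ is non-negative at an ordering close to $(1-2\tau(x,x'))\bm I$. That is the content of Lemma~\ref{lem:nc-depth} (non-classical depth is non-increasing under partial trace and partial overlap), whose partial-overlap half is proved by a fairly heavy covariance-matrix computation with Schur complements and block inversion. You instead sample the full tuple $(\bm\alpha,\bm\beta,\bm\gamma)$ from the product $W^{(\bm u\oplus\bm s)}_{\rho_G(x)}(\bm\alpha,\bm\gamma)\,W^{(-\bm v\oplus-\bm s)}_{\rho_G(x')}(\bm\beta,\bm\gamma)$ appearing in Eq.~\eqref{eq:kern-sPQDgen}, whose non-negativity follows \emph{directly} from the non-classical depths of the two full Gaussian states once the constraints $u_j,s_j\le1-2\tau(x)$ and $-v_j,-s_j\le1-2\tau(x')$ are met -- and your observation that $\tau\le\frac12$ makes the admissible window for the shared $\bm s$ non-empty is exactly the right resolution. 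Since the estimator depends only on $(\bm\alpha,\bm\beta)$ and the $\bm\gamma$-marginal of your joint density is $\pi^{-m}W^{(\bm u\oplus-\bm v)}_{\sigma(x,x')}$, your scheme is statistically equivalent to the paper's step (iii) and, as a by-product, yields a one-line proof of the non-negativity statement that the paper extracts from Lemma~\ref{lem:nc-depth}: a marginal of a non-negative function is non-negative. The remaining steps (quantum-efficiency plus the ratio lemma of Appendix~\ref{app:ratio}, the range bound \eqref{eq:gen-bound-range} with $t_{\min}=1-2\tau(x,x')\ge0$ and $\|\Pi\|\le1$, and Hoeffding) match the paper. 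Two small points to tidy up: take the ordering parameters strictly inside the admissible interval (at $s_j=1-2\tau$ exactly the Gaussian in Eq.~\eqref{eq:s-PQD-G} may be degenerate, which is why the paper says ``arbitrarily close to''), and your identification of the normalisation $\mathcal N$ with $\Tr[\sigma(x,x')]$ is off by a factor $\pi^{-m}$ that cancels against the $\pi^{m+2k}$ prefactor of Eq.~\eqref{eq:kern-sPQDgen} -- the final bound is unaffected.
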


\begin{proof}
Following Algorithm~\ref{algo:MCQD-kern2}, given two Gaussian states $\rho_G(x)$ and $\rho_G(x')$ over $k+m$ modes we define the state $\sigma(x,x')$ as the $(2k)$-mode Gaussian state obtained by taking the partial overlap of the last $m$ modes of $\rho_G(x)$ and $\rho_G(x')$ (see Fig.~\ref{fig:algo2}). We also denote by $\sigma(x)$ and $\sigma(x')$ the reduced states $\mathrm{Tr}_{k+1\dots k+m}[\rho_G(x)]$ and $\mathrm{Tr}_{k+1\dots k+m}[\rho_G(x')]$, respectively. 

Algorithm~\ref{algo:MCQD-kern2} combines three independent subroutines that have the same structure as Algorithm~\ref{algo:MCQD-kern1} (see Figs.~\ref{fig:algo1} and~\ref{fig:algo2}). We have shown in Appendix~\ref{app:ratio} that if each subroutine is efficient, then Algorithm~\ref{algo:MCQD-kern2} is also efficient for quantum-efficient encoding. 

We now bound the complexity of each subroutine using Eqs.~(\ref{eq:N-samples}) and (\ref{eq:range-E}). We obtain that the total number of samples for classical estimation up to additive precision $\epsilon$ and success probability $1-\delta$ is given by
\begin{equation}\label{eq:N-samples-tot}
	N\ge\frac2{\epsilon^{2}}\left[\mathcal R(E(x))^2+\mathcal R(E(x'))^2+\mathcal R(E(x,x'))^2\right]\ln\!\left(\frac2\delta\right)\!,
\end{equation}
where
\begin{align}\label{eq:range-Ex}
    \mathcal R(E(x))&=\mathcal N\big(W^{(\bm{s})}_{\sigma(x)}\big) \mathcal R(W^{(-\bm{s})}_{\Pi(x)})\\
    \mathcal R(E(x'))&=\mathcal N\big(W^{(\bm{s'})}_{\sigma(x')}\big) \mathcal R(W^{(-\bm{s'})}_{\Pi(x')})\\
    \mathcal R(E(x,x'))&=\mathcal N\big(W^{(\bm{u}\oplus-\bm v)}_{\sigma(x,x')}\big)\mathcal R(W^{(-\bm{u}\oplus\bm v)}_{\Pi(x)\otimes\Pi(x')}).
\end{align}
We choose the largest possible ordering parameters of the form $\bm s=s\bm I$ for each of the \SPQD s of the states $\sigma(x)$, $\sigma(x')$ and $\sigma(x,x')$ in all three subroutines of the algorithm, such that the corresponding \SPQD s are non-negative. Writing these parameters $s$, $s'$ and $t$, respectively, this gives
\begin{align}\label{eq:range-Ex2}
    \mathcal R(E(x))&=\mathcal R(W^{(-s\bm I)}_{\Pi(x)})\\
    \mathcal R(E(x'))&=\mathcal R(W^{(-s'\bm I)}_{\Pi(x')})\\
    \mathcal R(E(x,x'))&=\mathcal R(W^{(-t\bm I)}_{\Pi(x)\otimes\Pi(x')}),
\end{align}
where by Eq.~(\ref{eq:gen-bound-range}),
\begin{align}\label{eq:gen-bound-rangex}
    \mathcal R(W^{(-s\bm I)}_{\Pi(x)})&\le\left(\frac{2}{s+1}\right)^{k+1}\\
    \mathcal R(W^{(-s'\bm I)}_{\Pi(x')})&\le\left(\frac{2}{s'+1}\right)^{k+1}\\
    \mathcal R(W^{(-t\bm I)}_{\Pi(x)\otimes\Pi(x')})&\le\left(\frac{2}{t+1}\right)^{2k+1}\!\!\!\!\!\!\!\!\!\!.
\end{align} 
To conclude the proof, we simply need to show that the ordering parameters $s,s',t$ may be all chosen arbitrarily close to $1-2\tau(x,x')$, where $\tau(x,x')$ is the maximal non-classical depth of $\rho_G(x)$ and $\rho_G(x')$. To do so, we prove the following properties of the non-classical depth of Gaussian states, which appear to be new:

\begin{lem}\label{lem:nc-depth}
    The non-classical depth  of Gaussian states is non-increasing under partial trace and non-increasing under partial overlap.
\end{lem}

\begin{proof}
Let $\sigma$ be a Gaussian state over $k+m$ modes with covariance matrix $\Sigma$ and displacement vector $\bar{\bm r}$. Recall the expression for the \SPQD\ of a Gaussian state from Eq.~(\ref{eq:s-PQD-G}):
\begin{equation}\label{eq:s-PQD-Gapp}
W^{(\bm s)}_\sigma\left(\bm{\alpha}\right)=\frac{ e^{-\frac 12 (\bm{\alpha}- \bar{\bm r}) (\Sigma-\bm s\oplus\bm s)^{-1} (\bm{\alpha}- \bar{\bm r})^\top}}{(2\pi)^{m+k}\sqrt{\det(\Sigma-\bm s\oplus\bm s)}},
\end{equation}
for all $\bm\alpha\in\mathbb C^{k+m}$ and all $\bm s$ such that $\Sigma-\bm s\oplus\bm s$ is positive definite. By Definition~\ref{def:nc-depth}, the non-classical depth of the Gaussian state $\sigma$ is thus given by $\tau=\frac12(1-s)$, where $s$ is the supremum of the values such that $\Sigma\succ sI$, with $I$ being the $(2k+2m)\times (2k+2m)$ identity operator.

Let us write
\begin{equation}
    \Sigma=\begin{pmatrix}A&B\\B^T&C\end{pmatrix}.
\end{equation}
where $A$ is a $(2k)\times(2k)$ symmetric matrix and $C$ is a $(2m)\times(2m)$ symmetric matrix.
The condition $\Sigma\succ sI$ is equivalent to
\begin{equation}
    X^TAX+Y^TCY+2X^TB^TY>s(X^TX+Y^TY),
\end{equation}
for all $(X, Y)\in\mathbb R^{2k}\times\mathbb R^{2m}$. Setting $X=0$ gives $Y^TCY>sY^TY$ for all $Y\in\mathbb R^{2m}$, and thus $C\succ sI$, which implies that the non-classical depth of the Gaussian state obtained by taking the partial trace of $\sigma$ over the first $k$ modes (with covariance matrix $C$~\cite{weedbrook2012gaussian}) is smaller than that of $\sigma$. This shows that the non-classical depth of a Gaussian state is non-increasing under partial trace.

Let us now consider an additional Gaussian state $\sigma'$, with covariance matrix $\Sigma'$. The partial overlap over the last $m$ modes of $\sigma$ and $\sigma'$ is defined as
\begin{equation}
    \sigma'':=\Tr_{k+1,\dots,k+m,2k+m+1,\dots,2k+2m}\left[(\sigma^T\otimes\sigma')\bigotimes_{j=1}^m\ket{\mathrm{TWB}}\!\bra{\mathrm{TWB}}_{k+j,2k+m+j}\right],
\end{equation}
where $\ket{\mathrm{TWB}}:=\sum_{n\ge0}\ket{nn}$ is a infinitely-squeezed two-mode squeezed state or twin-beam state (TWB). The operator $\ket{\mathrm{TWB}}\!\bra{\mathrm{TWB}}$ can equivalently be obtained by sending a position eigenstate and a momentum eigenstate into a balanced beam-splitter~\cite{ferraro2005}, so the partial overlap can be expressed as
\begin{equation}\label{eq:partial-over-inter}
    \begin{aligned}
        \sigma''=&\Tr_{k+1,\dots,k+m,2k+m+1,\dots,2k+2m}\big[U_{BS}(\sigma^T\otimes\sigma')U_{BS}^\dag\\
        &\quad\quad\quad\times(\mathbb I_k\otimes\ket0\!\bra0_{q_{k+1}}\otimes\dots\otimes\ket0\!\bra0_{q_{k+m}}\otimes\mathbb I_k\otimes\ket0\!\bra0_{p_{2k+m+1}}\otimes\dots\otimes\ket0\!\bra0_{p_{2k+2m}})\big],
    \end{aligned}
\end{equation}
where $U_{BS}=\bigotimes_{j=1}^mU_{k+j,2k+m+j}$ is the passive linear operator corresponding to the action of the balanced beam splitters, and where $q_j$ and $p_j$ denote the position and momentum quadrature operators for the $j^{th}$ mode, respectively.

For any symmetric matrices $M$ and $M'$, $M\succ sI$ implies $OMO^T\succ sI$ for any orthogonal matrix $O$, and $M\succ sI$ and $M'\succ s'I$ implies $M\oplus M'\succ\min(s,s')I$.
The covariance matrix of the Gaussian state $\sigma''':=U_{BS}(\sigma^T\otimes\sigma')U_{BS}^\dag$ is given by $\Sigma''':=S_{U_{BS}}(T\Sigma T\oplus\Sigma')S_{U_{BS}}^T$, where $S_{U_{BS}}$ is the orthogonal matrix corresponding to the action of $U_{BS}$ on the vector of quadrature operators and $T=I\oplus(-I)$ is the orthogonal matrix corresponding to the action of the transposition on the vector of quadrature operators. In particular, its classical depth is smaller than the maximum of the non-classical depths of $\sigma$ and $\sigma'$. 

Finally, we write the covariance matrix of $\sigma'''=U_{BS}(\sigma^T\otimes\sigma')U_{BS}^\dag$ as
\begin{equation}
    \Sigma'''=\begin{pmatrix}A&B\\B^T&C\end{pmatrix},
\end{equation}
where $A$ is a $(4k)\times(4k)$ symmetric matrix and $C$ is a $(4m)\times(4m)$ symmetric matrix, and where we have ordered the vector of quadrature operators as (to get more convenient expressions later on):
\begin{equation}
    \begin{aligned}
        \bm r=(&q_1,\dots,q_k,p_1,\dots,p_k,q_{k+m+1},\dots,q_{2k+m},p_{k+m+1},\dots,p_{2k+m},\\
        &q_{k+1},\dots,q_{k+m},p_{2k+m+1},\dots,p_{2k+2m},p_{k+1},\dots,p_{k+m},q_{2k+m+1},\dots,q_{2k+2m})^T.
    \end{aligned}
\end{equation}
Then, from Eq.~(\ref{eq:partial-over-inter}) the covariance matrix of the partial overlap state $\sigma''$ is the conditional covariance matrix corresponding to a measurement of the position quadratures for the modes $k+1,\dots,k+m$ and of the momentum quadratures for the modes $2k+m+1,\dots,2k+2m$, that is~\cite{weedbrook2012gaussian}
\begin{equation}
    \Sigma''=A-B(\Pi C\Pi)^{-1}B^T,
\end{equation}
where $\Pi=I_{2m}\oplus0_{2m}$ is the projector selecting the quadratures being measured, and where the inverse is understood in the generalized (pseudo-inverse) sense, i.e., $(\Pi C\Pi)^{-1}=C_1^{-1}\Pi$, where $C_1$ is the top-left block of $C$ selected by $\Pi$. Then, the condition $\Sigma''\succ sI$ implies $C-sI\succ0$, so $C-sI$ is invertible, and
\begin{equation}
    \begin{pmatrix}A-sI&B\\B^T&C-sI\end{pmatrix}\succ0.
\end{equation}
Hence, the Schur complement of $C-sI$ in this matrix is also positive definite~\cite{boyd2004convex}, i.e.,
\begin{equation}\label{eq:schur-psd}
    A-sI-B(C-sI)^{-1}B^T\succ0.
\end{equation}
Now for all $s\ge0$ we have
\begin{equation}\label{eq:succ-s}
    (C-sI)^{-1}\succeq C^{-1}.
\end{equation}
Writing $\begin{pmatrix}C_1&C_2\\C_2^T&C_3\end{pmatrix}$, for all $(X, Y)\in\mathbb R^{2m}\times\mathbb R^{2m}$ we have
\begin{equation}\label{eq:succ-proj}
    \begin{pmatrix}X^T&Y^T\end{pmatrix}\left[C^{-1}-(\Pi C\Pi)^{-1}\right]\begin{pmatrix}
        X\\Y
    \end{pmatrix}=X^TC_1^{-1}C_2S^{-1}C_2^TC_1^{-1}X-2Y^TS^{-1}C_2^TC_1^{-1}X+Y^TS^{-1}Y,
\end{equation}
where $S=C_3-C_2^TC_1^{-1}C_2\succ0$ is the Schur complement of $C_1\succ0$, and where we have used
\begin{equation}
    (\Pi C\Pi)^{-1}=\begin{pmatrix}
        C_1^{-1}&0\\0&0
    \end{pmatrix},
\end{equation}
and the block inversion formula~\cite{cormen2009introduction}
\begin{equation}
    C^{-1}=\begin{pmatrix}
        C_1&C_2\\C_2^T&C_3
    \end{pmatrix}^{-1}=\begin{pmatrix}
        C_1^{-1}+C_1^{-1}C_2S^{-1}C_2^TC_1^{-1}&-C_1^{-1}C_2S^{-1}\\-S^{-1}C_2^TC_1^{-1}&S^{-1}
    \end{pmatrix}.
\end{equation}
Setting $X':=C_2^TC_1^{-1}X$ in Eq.~(\ref{eq:succ-proj}) we obtain
\begin{align}
    \begin{pmatrix}X^T&Y^T\end{pmatrix}\left[C^{-1}-(\Pi C\Pi)^{-1}\right]\begin{pmatrix}
        X\\Y
    \end{pmatrix}&={X'}^{T}S^{-1}X'-2Y^TS^{-1}X'+Y^TS^{-1}Y\\
    &=(X'-Y)^TS^{-1}(X'-Y)\ge0,
\end{align}
since $S^{-1}\succ0$. This implies $C^{-1}\succeq(\Pi C\Pi)^{-1}$ and together with Eq.~(\ref{eq:succ-s}) we obtain $(C-sI)^{-1}\succeq(\Pi C\Pi)^{-1}$ and thus $B(C-sI)^{-1}B^T\succeq B(\Pi C\Pi)^{-1}B^T$. With Eq.~(\ref{eq:schur-psd}), this finally yields
\begin{equation}
    \Sigma''=A-B(\Pi C\Pi)^{-1}B^T\succeq A-B(C-sI)^{-1}B^T\succ sI,
\end{equation}
when assuming that $\Sigma'''\succ sI$. This shows that the non-classical depth of $\sigma''$ is smaller than that of $\sigma'''$, which itself was smaller than the maximum of the non-classical depths of $\sigma$ and $\sigma'$. This completes the proof that the non-classical depth of Gaussian states is non-increasing under partial overlap.
\end{proof}

\noindent Since $\sigma(x)$ (resp.\ $\sigma(x')$) is obtained from $\rho_G(x)$ (resp.\ $\rho_G(x')$) by taking a partial trace, and $\sigma(x,x')$ is a partial overlap of the states $\rho_G(x)$ and $\rho_G(x')$, Lemma~\ref{lem:nc-depth} ensures that the non-classical depths of the states $\sigma(x),\sigma(x'),\sigma(x,x')$ are all bounded by $\tau(x,x')$, the maximal non-classical depth of $\rho_G(x)$ and $\rho_G(x')$. By Definition~\ref{def:nc-depth}, this implies that the ordering parameters $s,s',t$ in Eq.~(\ref{eq:gen-bound-rangex}) may be all chosen arbitrarily close to $1-2\tau(x,x')$, which concludes the proof of Theorem~\ref{th:part-meas-G}, by noting that all the covariance matrices involved can be computed in time $\mathrm{poly}(m) $.

\end{proof}


\section{Kernel estimation for adaptive Gaussian boson sampling}
\label{app:AGBS}

Given the limitations of quantum computations based on LONs for quantum kernel methods identified in the main text, we can ask whether simple extensions of LONs can restore their usefulness. One such natural extension is through the addition of \textit{adaptivity} in the measurements. Also known as feed-forward, adaptivity refers to the possibility of modifying part of the computation based on the outcomes of intermediate measurements, as in Measurement-Based Quantum Computing~\cite{briegel2009measurement} in the quantum circuit picture. Adaptivity is particularly relevant in the context of quantum computing with LONs, as it allows for performing universal quantum computations, e.g.\ through the Knill--Laflamme--Milburn scheme~\cite{knill2001scheme} or the more recent Fusion-Based Quantum Computing model~\cite{bartolucci2023fusion}. In those schemes, the addition of adaptive measurements to LONs allows for the active switching of offline resource entangled states into a LON, which can be used to implement a universal gate set on qubits encoded using photons in a near-deterministic fashion.

Kernel estimation becomes \textsf{BQP}-complete in the regime of enough adaptive measurements $k=\mathrm{poly}(m)$, where $m=\mathrm{poly}(n)$ is the number of photonic modes supporting the computations over $n$ qubits. Hence, it is expected that, unless \textsf{BPP}$=$\textsf{BQP}, estimating quantum kernels that are based on LONs with adaptive measurements is hard for classical computers (note that hardness of kernel estimation does not necessarily entails hardness of the corresponding learning task). 

What are the conditions necessary to enable quantum computational advantage through quantum kernel methods with adaptive measurements? This question was considered in~\cite{QMLadapBS} for adaptive boson sampling with input Fock states, where it was shown that if too few photons are being detected by the adaptive measurements, then there exists an efficient classical algorithm for estimating quantum kernel functions. However, since near-indistinguishable single-photon states are hard to generate experimentally in a near-deterministic fashion, a natural question is to investigate the limitations of near-term quantum computational advantages through quantum kernel methods using adaptive LONs with realistic input quantum states. In particular, we consider the case of adaptive LONs with Gaussian input states (see Fig.~\ref{fig:lossy-adap-GBS}).

This mirrors the evolution of quantum computational advantage experiments based on sampling from the output distribution of LONs, which have progressively shifted from proof-of-concept demonstration of boson sampling with input Fock states~\cite{aaronson2011computational,brod2019photonic} to Gaussian boson sampling~\cite{hamilton2017gaussian,zhong2020quantum}, where the input Fock states are replaced by Gaussian states, much easier to generate experimentally in optical platforms. 
Since single-photon states can be prepared in an offline fashion using Gaussian two-mode squeezed states and heralded photon-number measurements, and given that kernel estimation with linear optical computations using input single photons and adaptive measurements is \textsf{BQP}-complete, kernel estimation with linear optical computations using adaptive Gaussian boson sampling is also \textsf{BQP}-complete. 

In what follows, we derive sufficient conditions for efficient classical estimation of quantum kernel functions based on adaptive Gaussian boson sampling output states. In particular, we show that, similar to the case of adaptive boson sampling~\cite{QMLadapBS}, if too few photons are being detected by the adaptive measurements, then there exists an efficient classical algorithm for estimating quantum kernel functions based on adaptive Gaussian boson sampling.

\begin{figure*}[t]
	\begin{center}
		\includegraphics[width=0.9\linewidth]{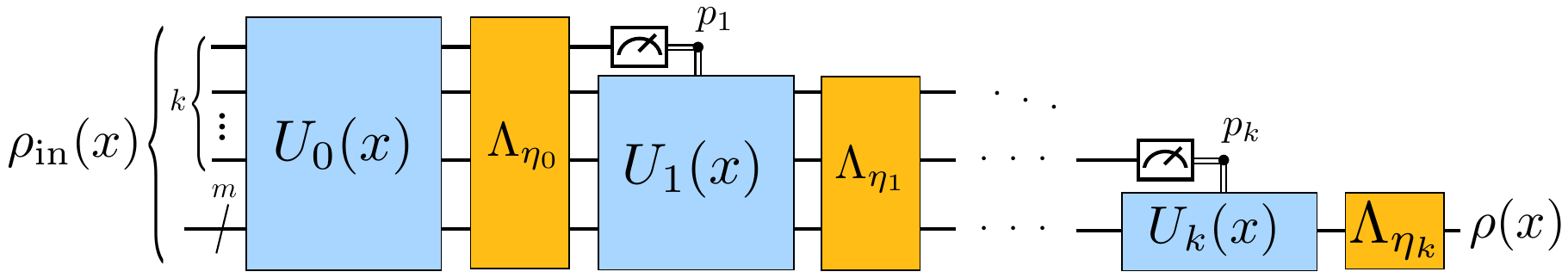}
		\caption{An adaptive Gaussian boson sampling computation, parametrised by classical data $x$. A total of $k$ modes of the initial $k+m$ modes are being adaptively measured with photon-number detection. Specific photon number measurement outcomes are being represented here, but our results cover general adaptive measurement strategies. The unitary operators $U_0(x),\dots,U_k(x)$ are Gaussian. The yellow blocks $\Lambda_{\bm\eta_0},\dots,\Lambda_{\bm\eta_k}$ represent potential layers of losses which may be non-uniform over the modes.}
		\label{fig:lossy-adap-GBS}
	\end{center}
\end{figure*}

By deferring the photon-number measurements to the end, the $m$-mode output state of a generic adaptive Gaussian boson sampling computation with $k$ adaptive measurements takes the form
\begin{equation}
    \rho=\sum_{\bm p}\mathrm{Tr}_k\!\left[(\Pi_{\bm p}\otimes\mathbb I_m)U^{(\bm p)}\rhoin U^{(\bm p)\dag}\right]\!,
\end{equation}
where the sum is over adaptive photon-number measurement patterns $\bm p=(p_1,\dots,p_k)$, with $\Pi_{\bm p}=|\bm p\rangle\langle\bm p|$, where the partial trace is over the first $k$ modes, and where
\begin{equation}
    U^{(\bm p)}:=\left(\mathbb I_k\otimes U_k^{(p_k)}\right)\!\left(\mathbb I_{k-1}\otimes U_{k-1}^{(p_{k-1})}\right)\cdots\left(\mathbb I_1\otimes U_1^{(p_1)}\right)U_0,
\end{equation}
where each $U_j$ is a Gaussian unitary over $m+k-j$ modes depending on the previous adaptive measurement outcome $p_j$ (see Fig.~\ref{fig:lossy-adap-GBS}). We restrict to Gaussian unitary operators, but a similar reasoning extends straightforwardly to Gaussian channels. 

Parametrizing the input Gaussian state $\rhoin$ and the intermediate Gaussian unitary operations with classical data $x,x'$, the corresponding quantum kernel functions take the form
\begin{equation}
    K(x,x')=\mathrm{Tr}[\rho(x)\rho(x')]=\sum_{\bm p,\bm p'}\mathrm{Tr}[\rho_{\bm p}(x)\rho_{\bm p'}(x')],
\end{equation}
where
\begin{equation}
    \rho_{\bm p}(x):=\mathrm{Tr}_k\!\left[(\Pi_{\bm p}\otimes\mathbb I_m)U^{(\bm p)}(x)\rhoin(x)U^{(\bm p)}(x)^\dag\right]\!,
\end{equation}
is a (sub-normalised) post-measurement state. The kernel thus rewrites as
\begin{equation}
    K(x,x')=\sum_{\bm p,\bm p'}K_{\bm p,\bm p'}(x,x'),
\end{equation}
with $K_{\bm p,\bm p'}(x,x'):=\mathrm{Tr}[\rho_{\bm p}(x)\rho_{\bm p'}(x')]$. Any such sub-kernel $K_{\bm p,\bm p'}(x,x')$ can be efficiently estimated through the first step of Algorithm~\ref{algo:MCQD-kern2} under the same conditions as in Theorem~\ref{th:part-meas-G}, i.e., if the number of adaptive measurements or the non-classical depth of the Gaussian states involved is small enough. This provides in turn a simple classical algorithm for estimating the full quantum kernel $K(x,x')$: writing $\mathcal S(x,x')$ a set of likely adaptive measurement patterns $\bm p,\bm p'$, i.e., which by definition satisfies
\begin{equation}
    \mathrm{Pr}[(\bm p,\bm p')\notin\mathcal S(x,x')]\le\frac1{\mathrm{poly}(m)},
\end{equation}
one may sample uniformly $(\bm p,\bm p')\in\mathcal S(x,x')$ and use Algorithm~\ref{algo:MCQD-kern2} to provide an estimate $\tilde K$ of $K_{\bm p,\bm p'}(x,x')$ up to additive precision $\epsilon$ with failure probability $\delta$. Then, by construction, 
\begin{equation}
    \big|K(x,x')-|\mathcal S(x,x')|\tilde K\big|\le\epsilon+\frac1{\mathrm{poly}(m)}.
\end{equation}
with probability $1-\delta$.

In particular, when the number of photons being detected by the adaptive measurements is too small, i.e., when $|\mathcal S(x,x')|\le\mathrm{poly}(m)$, this provides an efficient classical estimation algorithm for $K(x,x')$ under the same conditions as in Theorem~\ref{th:part-meas-G}, i.e., if the number of adaptive measurements or the non-classical depth of the Gaussian states involved is small enough. 

Note that the condition $|\mathcal S(x,x')|\le\mathrm{poly}(m)$ may be checked efficiently based on the energy of the Gaussian states $U^{(\bm p)}(x)\rhoin(x)U^{(\bm p)}(x)^\dag$ and $U^{(\bm p)}(x')\rhoin(x')U^{(\bm p)}(x')^\dag$. Moreover, $\mathcal S(x,x')$ may be chosen with an efficient classical description by picking the smallest $N(k,m,x),N'(k,m,x')$ such that $\mathcal S(N,N'):=\{(\bm p,\bm p'),\;\text{s.t.}\;|\bm p|\le N,|\bm p'|\le N'\}$ is a set of likely adaptive measurement patterns. A similar proof works for general POVM elements.

\end{document}